\newtheorem{theorem}{Theorem}
\newcommand{\tr}{\mbox{tr}}
\begin{document}

\title{ Expressibility-induced Concentration of  Quantum Neural Tangent Kernels}

\author{Li-Wei Yu}
\email{yulw@nankai.edu.cn}
\affiliation{Theoretical Physics Division, Chern Institute of Mathematics and LPMC, Nankai University, Tianjin 300071, P. R. China}
\author{Weikang Li}
\author{Qi Ye}
\author{Zhide Lu}
\author{Zizhao Han}
\affiliation{Center for Quantum Information, IIIS, Tsinghua University, Beijing 100084, P. R. China}
\author{Dong-Ling Deng}
\email{dldeng@tsinghua.edu.cn}
\affiliation{Center for Quantum Information, IIIS, Tsinghua University, Beijing 100084, P. R. China}
\affiliation{Hefei National Laboratory, Hefei 230088, People’s Republic of China}
\affiliation{Shanghai Qi Zhi Institute, 41st Floor, AI Tower, No. 701 Yunjin Road, Xuhui District, Shanghai 200232, China}
\begin{abstract}
Quantum tangent kernel methods provide an efficient approach to analyzing the performance of quantum machine learning models in the infinite-width limit, which is of crucial importance in designing appropriate circuit architectures for certain learning tasks. Recently,  they have been adapted to describe the convergence rate of training errors in quantum neural networks in an analytical manner. Here, we study the connections between the trainability and expressibility of quantum tangent kernel models. In particular,  for global loss functions,  we rigorously prove that  high expressibility of both the global and local quantum encodings can lead to exponential concentration of quantum tangent kernel values to zero. Whereas for local loss functions,  such issue of exponential concentration persists owing to the high expressibility, but can be partially mitigated. We further carry out extensive numerical simulations to support our analytical theories. Our discoveries unveil a pivotal characteristic of quantum neural tangent kernels, offering valuable insights for the design of wide quantum variational circuit models in practical applications.
\end{abstract}

\maketitle

Quantum-enhanced machine learning is a rapidly evolving field that seeks to combine the  power of quantum computation with classical machine learning algorithms \cite{Wittek2014Quantum, Schuld2021Machine}. This convergence promises a notable quantum advantage, empowering machine learning methodologies to address the challenges of processing big data in real-world scenarios, thereby surpassing their classical counterparts. 
Along this line, a number of pioneering works have been conducted, ranging from theoretical frameworks to practical algorithmic developments  \cite{Biamonte2017Quantum, Harrow2009Quantum, Lloyd2014Quantum, Dunjko2016Quantum, Amin2018Quantum,  Havlicek2019Supervised, Schuld2019Quantum, Grant2018Hierarchical, Cong2019Quantum, Li2021Recent,Gao2018Quantum, Lloyd2018Quantum,Lu2020Quantum,Liu2020Vulnerability,Liu2021Rigorous, Huang2021Quantum,Gao2022Enhancing,Anschuetz2023Interpretable,Mcclean2018Barren, Wang2021Noise, Cerezo2021Higher, Sharma2022Trainability, Arrasmith2021Effect, Holmes2021Barren, Marrero2021Entanglement, Patti2021Entanglement, Pesah2021Absence, Arrasmith2022Equivalence, Cerezo2021Cost, Uvarov2021Barren, Grant2019Initialization, Zhao2021Analyzing,Banchi2021Generalization, Caro2022Generalization, Caro2023Out, Du2023Problem,Poland2020No, Du2021Learnability, Sharma2022Reformulation,Liu2023Analytic, Rad2023Deep, Garcia-Martin2023Deep,Wu2021Expressivity,Lu2021Markovian,Marshall2023High,Manzano2023Parametrized,Li2022Quantum,Gyurik2023Exponential,Gyurik2023Exponential,Jerbi2023Quantum,Ye2023Quantum,Larasati2022Quantum,Shingu2022Variational,Yun2022Slimmable,Hirche2023Quantum,Gan2022Fock,Yang2023Analog,Li2022Concentration,Chen2022Variational,Bittel2021Training,Bittel2022Optimizing}.
Within the realm of quantum algorithms, significant progress has been made, encompassing quantum principal component analysis \cite{Lloyd2014Quantum}, quantum neural networks \cite{Grant2018Hierarchical, Cong2019Quantum, Li2021Recent}, quantum generative models \cite{Lloyd2018Quantum, Gao2018Quantum}, quantum adversarial learning \cite{Lu2020Quantum,Liu2020Vulnerability}, {\it etc.} Several of these innovations have been experimentally validated, showcasing the remarkable potential of quantum learning techniques for practical problem-solving and highlighting the distinctive attributes inherent to quantum models, thereby opening up exciting avenues for quantum-based approaches \cite{Hu2019Quantum, Ren2022Quantum, Pan2023Deep}.
In the field of quantum learning theory, a series of notable results have been achieved, encompassing interpretable quantum advantages \cite{Liu2021Rigorous, Huang2021Quantum,Gao2022Enhancing,Anschuetz2023Interpretable}, barren plateaus \cite{Mcclean2018Barren, Wang2021Noise, Cerezo2021Higher, Sharma2022Trainability, Arrasmith2021Effect, Holmes2021Barren, Marrero2021Entanglement, Patti2021Entanglement, Pesah2021Absence, Arrasmith2022Equivalence, Cerezo2021Cost, Uvarov2021Barren, Grant2019Initialization, Zhao2021Analyzing}, generalization \cite{Banchi2021Generalization, Caro2022Generalization, Caro2023Out, Du2023Problem}, learnability \cite{Poland2020No, Du2021Learnability, Sharma2022Reformulation}, and training dynamics \cite{Liu2023Analytic, Rad2023Deep, Garcia-Martin2023Deep}, {\it etc.}
Moreover, this exploration has introduced concepts from quantum physics, including entanglement, nonlocality, and quantum contextuality \cite{Anschuetz2023Interpretable, Gao2022Enhancing}, to elucidate the rationale behind the quantum advantage in learning models. These results deepen our understanding on the theory of quantum machine learning. For instance, the volume law of entanglement entropy in a quantum neural network can give rise to the challenging barren plateau problem, thus impeding the efficient training of quantum machine learning models \cite{Marrero2021Entanglement}.
Here we introduce the concept of quantum expressibility, and study how quantum expressibility influences the trainability of quantum neural tangent kernel (QNTK) models.

\begin{figure}
\centering
\includegraphics[width=\linewidth]{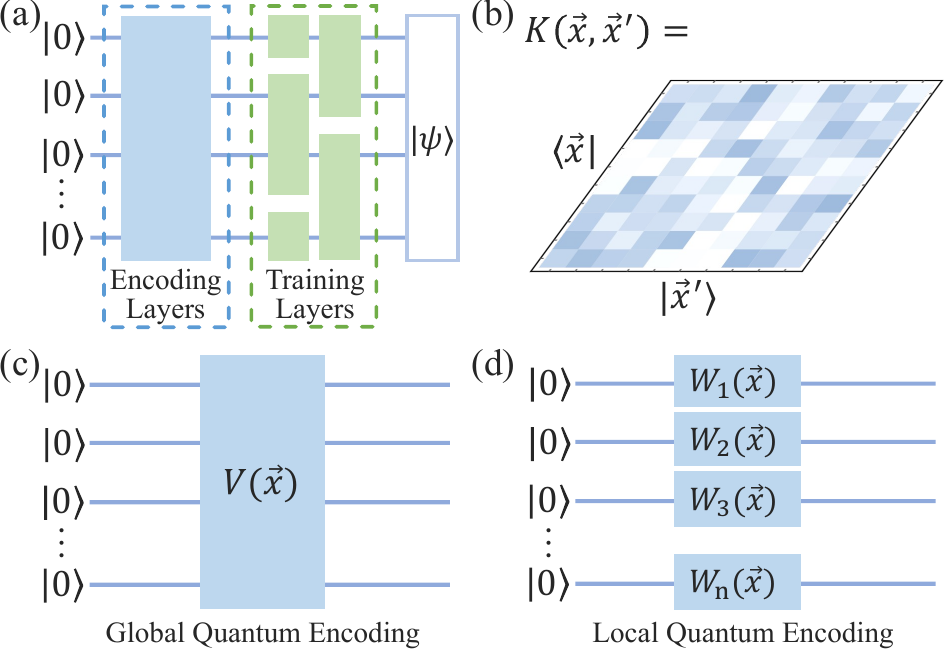} 
\caption{
{A schematic diagram of quantum machine learning circuit and the associated QNTK.
} 
(a) Quantum circuit of machine learning model. Input data samples are encoded through the encoding layers represented by the unitary $V(\vec{x})$, and the variational parameters for quantum machine learning are located in the training layers.
(b) Schematic description of the distribution of QNTK values in input data space.
(c) Global quantum encoding through the unitary $V(\vec{x})$.
(d) Local quantum encoding through the unitary  $W(\vec{x})=\otimes_{i=1}^n W_i$.
}
\label{Fig1_Blue}
\end{figure}

The introduction of the neural tangent kernel  provides a unique perspective  for examining the dynamical behavior of infinite-width deep neural networks and has gained considerable attention in recent years \cite{Jacot2018Neural, Lee2019Wide, Arora2019On, Du2019Graph, SohlDickstein2020On, Yang2020Neural, Yang2021Feature, Barzilai2023A}. It is crucial in uncovering the duality between the training dynamics of infinite-width deep neural networks and kernel methods,
thus providing predictability to learning capabilities. For instance, the eigenvalues of the neural tangent kernel can predict the generalization ability of machine learning models \cite{Simon2023The}. Moreover, inspired by the success in characterizing classical neural networks, there has been a surge of interest in adapting the QNTK approach to wide quantum neural networks \cite{Shirai2022Quantum, Liu2022Representation, Liu2023Analytic, Rad2023Deep, Garcia-Martin2023Deep, You2023Analyzing, Luo2023Infinite,Wang2023Symmetric}. 
 In particular, based on the theory of QNTK, one can analytically study the training dynamics of wide quantum neural networks \cite{Liu2022Representation}, thereby characterizing the convergence of residual training error \cite{Liu2023Analytic}. In addition, the QNTK can be employed for representation learning, with numerical results illustrating its superior performance compared to conventional quantum kernel methods when handling ansatz-generated datasets \cite{Shirai2022Quantum}. These intriguing findings highlight the indispensable role of QNTK in establishing the foundation of quantum machine learning theory. Yet, it is worth noting that the theory of QNTK is still in its infancy, with numerous critical issues remaining largely unexplored.


In quantum machine learning, one pivotal step involves encoding input samples into quantum states using unitary quantum circuits, which concerns the encoding expressibility. 
The expressibility of a set of unitary elements is defined by how effectively the set uniformly spans the unitary group, determining the ability of unitary circuits to approximate arbitrary unitary operations \cite{Harrow2009Random, Sim2019Expressibility}. Recently, the expressibility has been shown connected to the barren plateau problem in quantum machine learning models \cite{Holmes2022Connecting, Thanasilp2022Exponential}. In this paper, by exploring the distribution of QNTK values in the input data space, we investigate the connections between the expressibility and trainability of QNTK models, which is an important but hitherto unexplored issue in the existing literature.   In particular, we consider the expressibility of both global and local quantum encodings, and rigorously prove that: (i) For global loss functions,  sufficient encoding expressibility will lead to an exponential concentration of QNTK values converging to zero, with respect to the system size. This concentration hinders the efficient estimation of QNTK values. (ii) For local loss functions, such expressibility-induced exponential concentration of QNTK values persists but can be partially mitigated by judiciously selecting appropriate observables. In addition, we provide extensive numerical results that validate the analytical predictions for various system sizes and circuit depths.

{\it Expressibility of quantum encoding.}--- To encode the $q$-dimensional classical data vector  ${\vec{x}}\in \mathbb{C}^q$ into the  quantum state $|{\vec{x}}\rangle = V({\vec{x}})|\psi_0\rangle$, where $|\psi_0\rangle$ denotes the computational basis, one needs a map from the vector space to the unitary group element $\mathbb{V}_{\vec{x}}: {\vec{x}}\in  \mathbb{C}^q \longmapsto V({\vec{x}})\in \mathcal{V}(q)$, with  $\mathcal{V}(q)$ a set of $q$-dimensional  unitaries.  Thus all of the feasible unitary elements $\mathbb{V}_{\vec{x}}$  constitute an ensemble $\mathbb{V}=\{\mathbb{V}_{\vec{x}}\}$. We focus on the global and local quantum encodings, as well as their expressibilities. 
The  expressibility of a global encoding ensemble $\mathbb{V}$ is defined by how close the ensemble $\mathbb{V}$ is to a unitary $t$-design  \cite{Sim2019Expressibility,Holmes2022Connecting,Thanasilp2022Exponential}
\begin{equation}\label{Expressibility}
\mathcal{A}^{[t]}_{\mathbb{V}} (\cdot)=\mathcal{D}^{[t]}_{\rm Haar}(\cdot)-\int_{\mathbb{V}}dVV^{\otimes t}(\cdot)^{\otimes t}(V^{\dag})^{\otimes t},
\end{equation}
where $\mathcal{D}^{[t]}_{\rm Haar}(\cdot) = \int_{\mathcal{U}(q)}d\mu(U) U^{\otimes t}(\cdot)^{\otimes t}(U^{\dag})^{\otimes t}$ represents the $t$-moment Haar integral of the unitary group $\mathcal{U}(q)$ \cite{Collins2006Integration}. One can define the data-dependent measure of the $t$-moment expressibility by the trace norm $\mathcal{M}^{[t]}_{\mathbb{V}} =\left \| \mathcal{A}^{[t]}_{\mathbb{V}}(\rho_0)\right \| _1$ \cite{Thanasilp2022Exponential}, 
given that the initial quantum state is $\rho_0$. When the ensemble $\mathbb{V}$ is $t$-moment maximally expressive, the term  $\mathcal{M}^{[t]}_{\mathbb{V}}=0$.  We note that if an ensemble forms the unitary $t$-design, it also forms the unitary $(t-1)$-design \cite{Renes2004Symmetric,Dankert2009Exact,Harrow2009Random}. Thus for the expressibility measure of an ensemble,  $\mathcal{M}^{[t]}_{\mathbb{V}}\rightarrow0$ indicates $\mathcal{M}^{[t-1]}_{\mathbb{V}}\rightarrow0$.  In our work, it suffices to consider the case of $t=2$, {\it i.e.}, the unitary 2-design.

Similarly, the expressibility of an $n$-qubit local encoding ensemble $\mathbb{W}$ is defined by how closely the ensemble $\mathbb{W}$ approximates a $t$-design of products of single-qubit unitaries:
\begin{equation}\label{Expressibility_Local}
\mathcal{A}^{[t]}_{\mathbb{W}} (\cdot) = \mathcal{D}^{[t]}_{{\rm Haar}_1^{\otimes n}}(\cdot) - \int_{\mathbb{W}}dWW^{\otimes t}(\cdot)^{\otimes t}(W^{\dag})^{\otimes t},
\end{equation}
where $\mathcal{D}^{[t]}_{{\rm Haar}_1^{\otimes n}}(\cdot)$ represents the $t$-moment integral over the products of Haar-random single-qubit unitaries. The measure of expressibility is $\mathcal{M}^{[t]}_{\mathbb{W}} = \left \| \mathcal{A}^{[t]}_{\mathbb{W}}(\rho_0)\right \| _1$.

 {\it Quantum neural tangent kernel.}--- In the context of quantum machine learning, the mean square loss function for a supervised learning task reads
\begin{equation}\label{MSE_Loss}
\mathcal{L} = \frac{1}{2}\sum_i \left(\langle{\vec{x}}_i |U^\dag(\vec{\theta})\hat{O}U(\vec{\theta})|{\vec{x}}_i\rangle - y_i\right)^2 = \frac{1}{2}\sum_i \varepsilon_i^2,
\end{equation}
where $\vec{\theta}$ represents the set of variational parameters,  ${\vec{x}}_i$ and $y_i$ represent the feature and the label of the $i$-th data sample, respectively. The training residual error, denoted by $\varepsilon_i = \langle{\vec{x}}_i |U^\dag(\vec{\theta})\hat{O}U(\vec{\theta})|{\vec{x}}_i\rangle - y_i$, serves as a fundamental quantity in this framework. The machine learning model is trained using the gradient descent algorithm, where the variational parameters are updated as $\delta \theta_{\lambda} = -\eta\frac{\partial \mathcal{L}(\vec{\theta})}{\partial \theta_{\lambda}}=-\eta \sum_i \varepsilon_i \frac{\partial \varepsilon_i}{\partial \theta_{\lambda}}$. When the learning rate $\eta$ is small, the residual error can be updated as $\delta\varepsilon_m \approx \sum_\lambda \frac{\partial \varepsilon_m}{\partial \theta_{\lambda}}\delta \theta_{\lambda} = -\eta \sum_{i,\lambda} \varepsilon_i\frac{\partial \varepsilon_m}{\partial \theta_{\lambda}} \frac{\partial \varepsilon_i}{\partial \theta_{\lambda}} = -\eta\sum_{i} K_{m,i}\varepsilon_i$, where the QNTK is introduced through \cite{Liu2022Representation}
\begin{equation}\label{QNTK}
K_{m,i}=K(\vec{x}_m, \vec{x}_i) =  \sum_{\lambda}\frac{\partial \varepsilon_m}{\partial \theta_{\lambda}} \frac{\partial \varepsilon_i}{\partial \theta_{\lambda}}.
\end{equation}
It is noteworthy that the corresponding kernel matrix in this setting is positive semi-definite. The QNTK plays a crucial role in describing the training dynamics of quantum machine learning models. Specifically, in the training process of wide quantum machine learning models, driven by the goal of minimizing the absolute values of the residual errors $\{\varepsilon_i\}$ using gradient descent on mean square loss functions, the QNTK characterizes the asymptotic training dynamics of the residual errors $\{\varepsilon_i\}$. In particular, in the regime of lazy learning \cite{Liu2022Representation}, where the variational parameters $\vec{\theta}$ undergo minimal changes, the training dynamics of the residual error $\varepsilon_i(t)$ at time $t$ can be approximated as $\varepsilon_i(t)\approx (1-\eta K)^t_{i,j}\varepsilon_j(0)$, with $\epsilon_j(0)$ representing the initial residual error at time $t=0$, and $\eta$ denoting the small learning rate. 
Therefore, by studying the properties of QNTK, one can conveniently predict the dynamical behaviors in the training process of the quantum learning models.

{\it Expressibility-induced concentration of QNTK.}---Akin to their classical counterparts, QNTK is crucial to predicting the training dynamics of wide quantum neural networks, which are defined as  variational quantum circuits with a multitude of qubits and variational parameters \cite{Liu2023Analytic}.  In a typical supervised quantum learning task involving classical data, the transformation of  the input set $\mathcal{X}$ into quantum states hinges on the expressibility of the quantum feature map. Here we delve into the connection between the expressibility of quantum encoding and the trainability of the quantum tangent kernel model, with a focus on the loss function defined in Eq.~(\ref{MSE_Loss}). We first consider the case of global encoding. 
 

 \begin{figure}
\centering
\includegraphics[width=\linewidth]{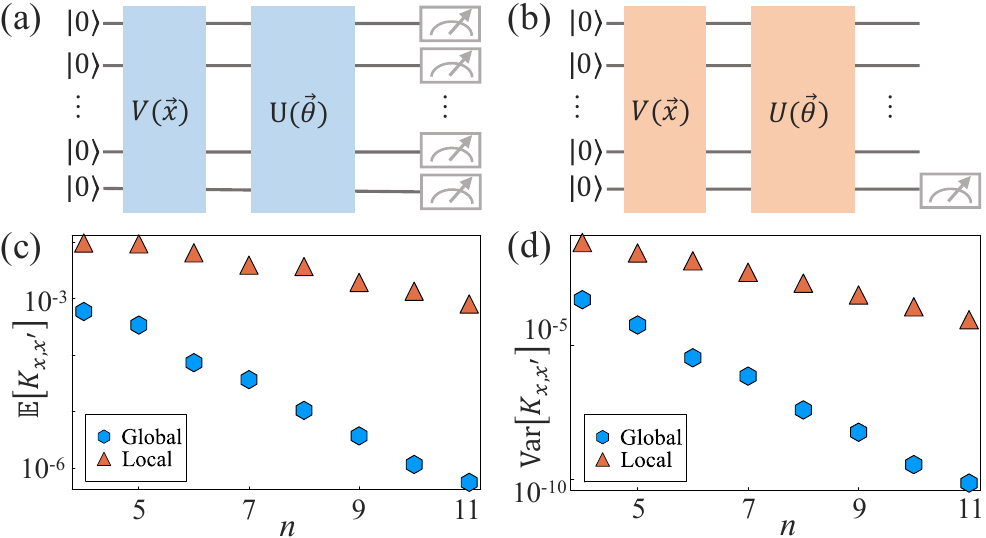} 
\caption{ {Numerical results on the mean and variance of QNTK values over the input data space.} 
Quantum learning models with (a) global observable  and (b) local observable, with the global encoding layers $V(\vec{x})$ and the variational layers $U(\vec{\theta})$. (c) Mean value and (d) Variance of QNTK versus the system size $n$. In both cases, the system size $n$ varies from four qubits to eleven qubits, and the number $d$ of variational layers  is proportional to the system size $n$ \cite{EICQNTK_supplementary}. 
}
\label{Fig2_var}
\end{figure}

\begin{theorem}[Global Encoding]\label{theorem:global} 
Given the quantum neural tangent kernel defined in Eq.~(\ref{QNTK}), for arbitrary two input samples $\vec{x}$ and $\vec{x}'$ drawn from the same distribution, the QNTK $K(\vec{x}, \vec{x}')$ adheres to the following inequality:
\begin{equation}\label{QNTK_Distribution}
\begin{aligned}
&\mathrm{Pr}_{\vec{x}, \vec{x}'}\left(|K(\vec{x}, \vec{x}')+\mathcal{O}(\mathcal{M}^{[1]}_{\mathbb{V}})|\geq \epsilon\right)\\ 
&\leq  \frac{2\epsilon^{-2}\Lambda^2}{(2^{2n}-1)^2}\left(\operatorname{tr}[ \hat{O}^2]\right)^2-\epsilon^{-2}\mathcal{O}((\mathcal{M}^{[1]}_{\mathbb{V}})^2,\mathcal{M}^{[2]}_{\mathbb{V}}),
\end{aligned}
\end{equation}
for arbitrary small positive constant $\epsilon$, where $n$ is the number of qubits, $\Lambda$ is the number of trainable parameters, and $\mathcal{M}^{[1,2]}_{\mathbb{V}}$ represents the measure of the $\{1,2\}$-moment expressibility of the global encoding ensemble $\mathbb{V}$.
\end{theorem}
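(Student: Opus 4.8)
The plan is to bound the probability via Chebyshev's inequality applied to the random variable $K(\vec x,\vec x')$, which requires computing its first two moments over the product distribution of $\vec x,\vec x'$. The key reduction is that, in the lazy-learning/wide-circuit regime, the derivative $\partial_{\theta_\lambda}\varepsilon_i$ can be written (via the parameter-shift identity) as an expectation of a commutator $\langle \vec x_i| U^\dagger [\hat O, G_\lambda] U|\vec x_i\rangle$ for generator $G_\lambda$, so that $K_{m,i}=\sum_\lambda \partial_{\theta_m}\varepsilon\,\partial_{\theta_i}\varepsilon$ is a bilinear functional of the two encoded states $\rho_{\vec x}=V(\vec x)\rho_0 V^\dagger(\vec x)$ and $\rho_{\vec x'}$. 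First I would push the $\vec x$-average inside, replacing $\mathbb{E}_{\vec x}[\rho_{\vec x}^{\otimes t}]$ by the Haar $t$-design moment operator plus an error term controlled in trace norm by $\mathcal{M}^{[t]}_{\mathbb{V}}$ for $t=1,2$; this is exactly the content of Eqs.~(\ref{Expressibility}) and the definition of $\mathcal{M}^{[t]}_{\mathbb{V}}$. The $\mathcal{O}(\mathcal{M}^{[1]}_{\mathbb{V}})$ shift appearing inside the probability, and the $\mathcal{O}((\mathcal{M}^{[1]}_{\mathbb{V}})^2,\mathcal{M}^{[2]}_{\mathbb{V}})$ correction to the bound, are precisely the residuals of these two replacements at first and second moment order.

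Next I would evaluate the Haar-averaged moments explicitly using Weingarten calculus / the standard $\mathcal{U}(q)$ twirl formulas with $q=2^n$. For the mean: $\mathbb{E}_{\rm Haar}[\rho_{\vec x}]=\mathbb{I}/2^n$, so $\mathbb{E}[\partial_{\theta_\lambda}\varepsilon]$ involves $\tr([\hat O,G_\lambda])/2^n=0$, giving that the leading mean of $K$ is $-\mathcal{O}(\mathcal{M}^{[1]}_{\mathbb V})$ — hence the additive term inside the probability. For the second moment $\mathbb{E}[K^2]$, the two-copy twirl on each of $\rho_{\vec x}$ and $\rho_{\vec x'}$ produces the symmetric-subspace projector combination $(\mathbb{I}\otimes\mathbb{I}+\mathrm{SWAP})/(2^n(2^n+1))$; contracting the resulting tensors against the commutators $[\hat O,G_\lambda]$ and summing over the $\Lambda$ parameters yields, after the dust settles, a leading contribution of order $\Lambda^2 (\tr[\hat O^2])^2/(2^{2n}-1)^2$. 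The factor $(2^{2n}-1)$ is the dimension denominator from the traceless-symmetric sector, and the $(\tr[\hat O^2])^2$ arises because each copy of $K$ contributes one power of $\tr[\hat O^2]$ when the SWAP contractions are carried out (the $[\hat O,G_\lambda]$-dependent pieces organize into $\tr[\hat O^2]$ up to generator-normalization constants absorbed into the $\mathcal{O}$ and into $\Lambda$). Plugging $\mathbb{E}[K]\approx -\mathcal{O}(\mathcal{M}^{[1]}_{\mathbb V})$ and this bound on $\mathbb{E}[K^2]$ into $\mathrm{Pr}(|K-\mathbb{E}K|\ge\epsilon)\le \var(K)/\epsilon^2\le \mathbb{E}[K^2]/\epsilon^2$ and collecting the expressibility residuals gives Eq.~(\ref{QNTK_Distribution}), with the factor $2$ coming from bounding $\var(K)\le \mathbb{E}[K^2]+(\mathbb{E}K)^2$-type splitting of cross terms.

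The main obstacle I anticipate is the bookkeeping of the second-moment Weingarten contraction: with two independent encodings each twirled at the $2$-design level, one faces a sum over pairings on four tensor legs, and one must carefully track which contractions involve the observable $\hat O$ versus the identity part of $\rho_0$, show that all terms are either of the claimed leading order $\Lambda^2(\tr[\hat O^2])^2/(2^{2n}-1)^2$ or subleading, and verify that the generator-dependent traces $\sum_\lambda \tr[G_\lambda^2]$, $\sum_\lambda\tr[\hat O G_\lambda \hat O G_\lambda]$, etc., are indeed absorbed into the stated $\mathcal{O}$-terms and the count $\Lambda$ under the standard assumption that the $G_\lambda$ are normalized (e.g. Pauli generators). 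A secondary subtlety is justifying that the lazy-training/parameter-shift form of $\partial_{\theta_\lambda}\varepsilon$ used here is the relevant one for the QNTK as defined in Eq.~(\ref{QNTK}), and that the error from the small-$\eta$ linearization does not interfere with the expressibility expansion — but this is standard in the QNTK literature and can be cited. Everything else is routine moment computation once the twirl structure is in place.
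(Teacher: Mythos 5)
Your proposal is correct and follows essentially the same route as the paper's proof: Chebyshev's inequality applied to $K(\vec{x},\vec{x}')$, with the first and second moments computed by replacing the encoding ensemble with exact $1$- and $2$-moment Haar twirls (residuals controlled by $\mathcal{M}^{[1,2]}_{\mathbb{V}}$), the leading mean killed by the traceless commutator form of $\partial_{\theta_\lambda}\varepsilon$, and the variance bounded via $(\operatorname{tr}[\partial_\lambda\partial_{\lambda'}])^2\leq \operatorname{tr}[\partial_\lambda^2]\operatorname{tr}[\partial_{\lambda'}^2]\lesssim(\operatorname{tr}[\hat{O}^2])^2$ summed over $\Lambda^2$ parameter pairs. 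Your use of the symmetric-subspace projector $(\mathbb{I}+\mathrm{SWAP})/(2^n(2^n+1))$ for the two-copy state twirl is just a compact, equivalent way of organizing the sixteen Weingarten contractions the paper writes out explicitly, and the lazy-training caveat you raise is unnecessary since the commutator form of the gradient is exact for the product parametrization used.
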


\begin{proof}
Here we present a concise overview of the main idea. Due to its technical intricacy, a comprehensive proof is provided in the Supplementary Materials \cite{EICQNTK_supplementary}. Our result  is based on the Chebyshev's inequality. To begin with, we compute the mean value of the QNTK across all feasible input samples. Utilizing the expressibility defined in Eq.~(\ref{Expressibility}), we express the mean value of $K(\vec{x}, \vec{x}')$ in terms of the 1-moment expressibility $\mathcal{M}^{[1]}_{\mathbb{V}}$, which tends towards zero for a highly expressive ensemble $\mathbb{V}$. Subsequently, we proceed to calculate the variance of QNTK using the 2-moment Haar integral of the unitary group. Direct calculations reveal that the variance of $K(\vec{x}, \vec{x}')$ can be expressed in terms of the 1-moment and 2-moment expressibilities. This leads to the result in Ineq.~(\ref{QNTK_Distribution}) and  completes the proof.
\end{proof}

In comparison to the global encoding, local quantum encoding typically demands lower circuit depth and exhibits greater scalability with increasing system size. This can be particularly advantageous when dealing with limited quantum hardware resources. We now explore how the expressibility of local encoding impacts the concentration of QNTK values.

\begin{theorem}[Local Encoding]\label{theorem:local}
For local quantum encoding, in the case of arbitrary two input samples $\vec{x}$ and $\vec{x}'$ drawn from the same distribution, the QNTK $K(\vec{x}, \vec{x}')$ obeys
\begin{equation}\label{QNTK_Distribution_Local}
\begin{aligned}
&\mathrm{Pr}_{\vec{x}, \vec{x}'}\left(|K(\vec{x}, \vec{x}')+\mathcal{O}(\mathcal{M}^{[1]}_{\mathbb{W}})|\geq \epsilon\right)\\
&\leq  \frac{\epsilon^{-2}\Lambda^2}{2^{2n-2}}\left(\operatorname{tr}[ \hat{O}^2]\right)^2-\epsilon^{-2}\mathcal{O}((\mathcal{M}^{[1]}_{\mathbb{W}})^2,\mathcal{M}^{[2]}_{\mathbb{W}}).
\end{aligned}
\end{equation}
Here, $\epsilon$ represents arbitrary small positive constant, $n$ denotes the number of qubits, $\Lambda$ denotes the number of trainable parameters, and $\mathcal{M}^{[1,2]}_{\mathbb{W}}$ denotes the measure of the $\{1,2\}$-moment expressibility of the local encoding ensemble $\mathbb{W}$.
\end{theorem}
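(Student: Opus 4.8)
\emph{Proof strategy.}---The plan is to prove Ineq.~(\ref{QNTK_Distribution_Local}) along the same lines as the proof of Theorem~\ref{theorem:global}: apply Chebyshev's inequality to $K(\vec{x},\vec{x}')$ over the data distribution, the only structural change being that every Haar average now factorizes qubit by qubit, since the local encoding ensemble $\mathbb{W}$ approximates a product of single-qubit designs [Eq.~(\ref{Expressibility_Local})]. I would first fix notation, writing $M(\vec{\theta})=U^{\dagger}(\vec{\theta})\hat{O}U(\vec{\theta})$ and $\rho_{\vec{x}}=W(\vec{x})\rho_{0}W^{\dagger}(\vec{x})$, so that $\varepsilon_{\vec{x}}=\tr[M\rho_{\vec{x}}]-y_{\vec{x}}$ and $\partial_{\lambda}\varepsilon_{\vec{x}}=\tr[(\partial_{\lambda}M)\rho_{\vec{x}}]$ with $\partial_{\lambda}\equiv\partial/\partial\theta_{\lambda}$. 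Because $\vec{x}$ and $\vec{x}'$ are drawn independently from the same distribution, $\mathbb{E}_{\vec{x},\vec{x}'}[K]=\sum_{\lambda}\big(\mathbb{E}_{\vec{x}}\tr[(\partial_{\lambda}M)\rho_{\vec{x}}]\big)^{2}$ and $\mathbb{E}_{\vec{x},\vec{x}'}[K^{2}]=\sum_{\lambda,\mu}\big(\mathbb{E}_{\vec{x}}\{\tr[(\partial_{\lambda}M)\rho_{\vec{x}}]\,\tr[(\partial_{\mu}M)\rho_{\vec{x}}]\}\big)^{2}$, so the whole problem reduces to the first two moments of $\rho_{\vec{x}}$ under $\mathbb{W}$. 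For the mean, a product of single-qubit $1$-designs sends $\rho_{0}$ to the maximally mixed state on each qubit, so $\mathbb{E}_{\vec{x}}[\rho_{\vec{x}}]=2^{-n}I+\Delta_{1}$ with $\|\Delta_{1}\|_{1}=\mathcal{O}(\mathcal{M}^{[1]}_{\mathbb{W}})$; since $\tr[\partial_{\lambda}M]=\partial_{\lambda}\tr[\hat{O}]=0$ and $\|\partial_{\lambda}M\|_{\infty}\le\|\hat{O}\|_{\infty}$, H\"older's inequality gives $\mathbb{E}_{\vec{x}}\tr[(\partial_{\lambda}M)\rho_{\vec{x}}]=\mathcal{O}(\mathcal{M}^{[1]}_{\mathbb{W}})$ and hence $\mathbb{E}[K]=-\mathcal{O}(\mathcal{M}^{[1]}_{\mathbb{W}})$, which is precisely the shift inside the absolute value of Ineq.~(\ref{QNTK_Distribution_Local}).

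For the variance, the key input is that a product of single-qubit $2$-designs reproduces, on the two copies of each qubit, the single-qubit Haar second moment $(I_{i}+S_{i})/6$ with $S_{i}$ the copy swap, so that $\mathbb{E}_{\vec{x}}[\rho_{\vec{x}}^{\otimes2}]=6^{-n}\bigotimes_{i=1}^{n}(I_{i}+S_{i})+\Delta_{2}$ with $\|\Delta_{2}\|_{1}=\mathcal{O}(\mathcal{M}^{[2]}_{\mathbb{W}})$. Expanding $\bigotimes_{i}(I_{i}+S_{i})=\sum_{A\subseteq[n]}S_{A}$ over partial swaps $S_{A}$ (the swap acting only on the qubits in $A$) and contracting, I would obtain
\begin{equation}
\mathbb{E}_{\vec{x}}\{\tr[(\partial_{\lambda}M)\rho_{\vec{x}}]\,\tr[(\partial_{\mu}M)\rho_{\vec{x}}]\}=\frac{1}{6^{n}}\sum_{A\subseteq[n]}\tr\big[(\tr_{\bar{A}}\partial_{\lambda}M)(\tr_{\bar{A}}\partial_{\mu}M)\big]+\mathcal{O}(\mathcal{M}^{[2]}_{\mathbb{W}}),
\end{equation}
where $\tr_{\bar{A}}$ traces out the qubits outside $A$, and the $A=\emptyset$ term drops out because $\tr[\partial_{\lambda}M]=0$. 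For each $A\ne\emptyset$ I would bound the summand using Cauchy--Schwarz in the Hilbert--Schmidt norm, the partial-trace contraction $\|\tr_{\bar{A}}X\|_{2}\le2^{(n-|A|)/2}\|X\|_{2}$, and the uniform estimate $\|\partial_{\lambda}M\|_{2}\le\|\hat{O}\|_{2}=\sqrt{\tr[\hat{O}^{2}]}$ --- the last of these following because $\partial_{\lambda}M$ equals, up to a unitary conjugation, $\tfrac{i}{2}[G_{\lambda},U_{+}^{\dagger}\hat{O}U_{+}]$ with $G_{\lambda}$ the Pauli generator of the $\lambda$-th gate and $U_{+}$ the gates applied after it, together with $\|[G_{\lambda},\cdot]\|_{2}\le2\|\cdot\|_{2}$. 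Since $\sum_{A\subseteq[n]}6^{-n}2^{n-|A|}=(3/6)^{n}=2^{-n}$, this gives $|\mathbb{E}_{\vec{x}}\{\cdots\}|\le2^{-n}\tr[\hat{O}^{2}]+\mathcal{O}(\mathcal{M}^{[2]}_{\mathbb{W}})$, and summing over the $\Lambda^{2}$ index pairs $(\lambda,\mu)$, $\var_{\vec{x},\vec{x}'}[K]\le\mathbb{E}[K^{2}]\le\Lambda^{2}(\tr[\hat{O}^{2}])^{2}/2^{2n}+\mathcal{O}((\mathcal{M}^{[1]}_{\mathbb{W}})^{2},\mathcal{M}^{[2]}_{\mathbb{W}})\le\Lambda^{2}(\tr[\hat{O}^{2}])^{2}/2^{2n-2}+\mathcal{O}((\mathcal{M}^{[1]}_{\mathbb{W}})^{2},\mathcal{M}^{[2]}_{\mathbb{W}})$. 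Feeding the mean and this variance bound into Chebyshev's inequality $\mathrm{Pr}_{\vec{x},\vec{x}'}(|K-\mathbb{E}[K]|\ge\epsilon)\le\var[K]/\epsilon^{2}$ then yields Ineq.~(\ref{QNTK_Distribution_Local}).

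The hard part, relative to the global case, is the second-moment bookkeeping: instead of a single symmetric-subspace projector one now has a sum over all $2^{n}$ partial-swap operators $S_{A}$, and one must verify that its combinatorial weight $\sum_{A}6^{-n}2^{n-|A|}$ still collapses to the exponentially small $2^{-n}$ --- i.e.\ that the factorized local-design structure does not destroy the concentration (it only slows it from $\sim2^{-4n}$ to $\sim2^{-2n}$). A secondary technical point is carrying the two expressibility errors simultaneously through H\"older estimates such as $|\tr[(\partial_{\lambda}M)\Delta_{1}]|\le\|\hat{O}\|_{\infty}\|\Delta_{1}\|_{1}$ and $|\tr[((\partial_{\lambda}M)\otimes(\partial_{\mu}M))\Delta_{2}]|\le\|\hat{O}\|_{\infty}^{2}\|\Delta_{2}\|_{1}$, so that the corrections to the mean and to the variance enter only at the advertised orders $\mathcal{O}(\mathcal{M}^{[1]}_{\mathbb{W}})$ and $\mathcal{O}((\mathcal{M}^{[1]}_{\mathbb{W}})^{2},\mathcal{M}^{[2]}_{\mathbb{W}})$.
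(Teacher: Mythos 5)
Your proposal is correct and reproduces the paper's result by the same overall route: Chebyshev's inequality, the $1$-moment local Haar average giving $\mathbb{E}[K]=\mathcal{O}(\mathcal{M}^{[1]}_{\mathbb{W}})$ via tracelessness of $\partial_\lambda M$, and the $2$-moment average $\mathbb{E}_{\vec{x}}[\rho_{\vec x}^{\otimes 2}]=6^{-n}\bigotimes_i(\mathbb{I}+\mathrm{SWAP}_{i,i})+\Delta_2$, with expressibility corrections carried through at the advertised orders. The one place you genuinely diverge is the bound on $\tr\bigl[\bigotimes_i(\mathbb{I}+\mathrm{SWAP}_{i,i})\,\partial_\lambda\otimes\partial_{\lambda'}\bigr]$: you expand the weight into $2^n$ partial swaps $S_A$, rewrite each term as $\tr[(\tr_{\bar A}\partial_\lambda M)(\tr_{\bar A}\partial_{\lambda'} M)]$, and combine Cauchy--Schwarz with the partial-trace contraction $\|\tr_{\bar A}X\|_2\le 2^{(n-|A|)/2}\|X\|_2$ and the binomial identity $\sum_{A}2^{n-|A|}=3^n$, also using tracelessness to kill the $A=\emptyset$ term; the paper instead applies the weighted trace inequality $|\tr[WA^\dagger B]|^2\le\tr[WA^\dagger A]\tr[WB^\dagger B]$ with $W\propto\bigotimes_i(\mathbb{I}+\mathrm{SWAP}_{i,i})$ and evaluates $\tr[\bigotimes_i(\mathbb{I}+\mathrm{SWAP}_{i,i})\,\partial_\lambda^2\otimes\mathbb{I}]=3^n\tr[\partial_\lambda^2]$ in one stroke. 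Both yield the same factor $3^n/6^n=2^{-n}$ per parameter pair; your version is more combinatorial and makes transparent where the slowdown from $2^{-4n}$ (global) to $2^{-2n}$ (local) comes from, while the paper's is shorter. One minor bookkeeping point: your estimate $\|\partial_\lambda M\|_2\le\|\hat O\|_2$ assumes gates of the form $e^{-i\theta G_\lambda/2}$, whereas the paper's parametrization $U(\vec\theta)=\prod_j e^{i\theta_j W_j}$ gives $\|\partial_\lambda M\|_2\le 2\|\hat O\|_2$ (equivalently the paper's $\tr[\partial_\lambda^2]\le 2\tr[\hat O^2]$); with that convention your argument reproduces exactly the stated constant $\Lambda^2(\tr[\hat O^2])^2/2^{2n-2}$, so the difference is purely one of convention and does not affect the theorem.
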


\begin{proof}
We outline the key steps here and leave the technical details to the Supplementary Materials \cite{EICQNTK_supplementary}. First, we compute the mean value of QNTK over all potential input samples, expressing them in terms of the 1-moment expressibility of local encoding as defined in Eq.~(\ref{Expressibility_Local}). Subsequently, we calculate the QNTK variance and express it using the 1-moment and 2-moment expressibilities of local encoding. By leveraging Chebyshev's inequality, we establish the connections between the distribution of QNTK values in the input data space and the expressibility of local encoding.
\end{proof}

The aforementioned theorems establish straightforward connections between the expressibility and the distribution of QNTK values in input data space, covering global and local encoding schemes. It is important to note that the terms $\mathcal{M}^{[1,2]}_{\mathbb{V}}$ ($\mathcal{M}^{[1,2]}_{\mathbb{W}}$) approach zero for highly expressive quantum encoding. In such a vein, we consider two typical sorts of loss functions: the global and local ones. For global loss functions, exemplified by the use of the observable $\hat{O} = (|0\rangle\langle0|)^{\otimes n}$, the probability that the QNTK value deviates from zero within a given precision $\epsilon$ decreases exponentially with respect to the increasing number of physical qubits. This exponential concentration renders the QNTK method inefficient and impractical for describing the training dynamics of infinite-width quantum neural networks.

In contrast, for local loss, the concentration of QNTK depends not only on expressibility but also on the specific formula of the observable operator $\hat{O}$. Without loss of generality, we assume that a local Pauli observable  $\hat{O}=\hat{O}_i$ acts non-trivially on the $i$-th physical site. Consequently, the upper bound of the trace term $\operatorname{tr}[\hat{O}^2]^2$ in Ineqs.~(\ref{QNTK_Distribution}-\ref{QNTK_Distribution_Local}) generally scales with the order of $2^{2n}$. By strategically designing the variational circuit and local observables, it becomes possible to mitigate, to some extent, the exponential concentration issue arising from high expressibility. As a result, the QNTK matrix can be experimentally estimated in a more efficient manner.

\begin{figure}
\centering
\includegraphics[width=\linewidth]{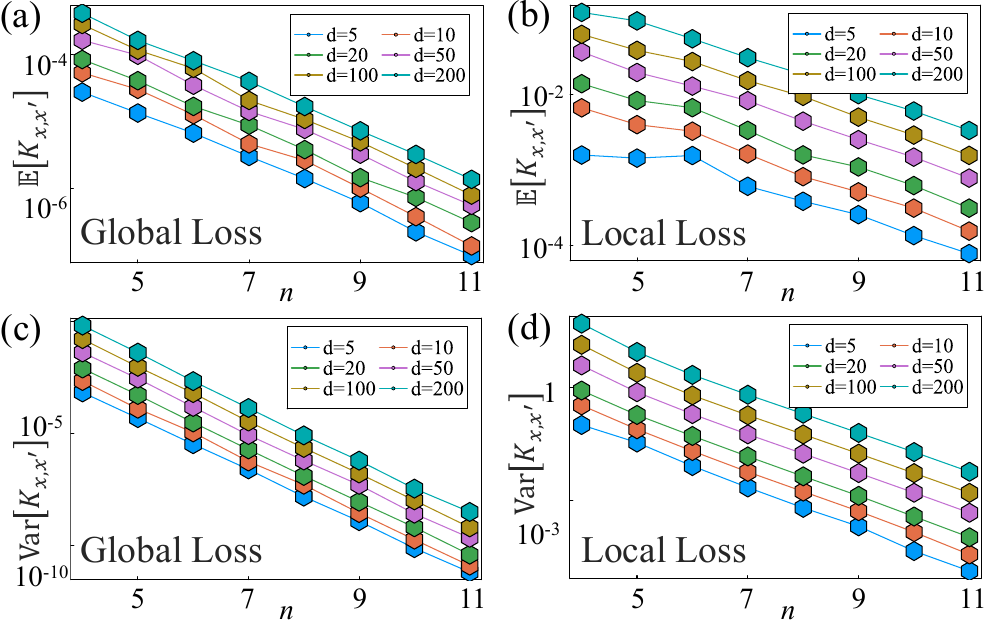} 
\caption{ {Numerical results on the mean and variance of QNTK values for different variational circuit layers.} Here we choose the global quantum encoding. 
(a) and (c) show the mean value and variance of $K(\vec{x},\vec{x}')$ for the global loss function, whereas (b) and (d) show the mean value and variance of $K(\vec{x},\vec{x}')$ for the local loss function. In both cases, the system size $n$ varies from four qubits to eleven qubits, and the number of  variational layer $d$ varies from 5 to 200.}
\label{Fig3_VM}
\end{figure}

{\it Numerical results.}--- To validate our analytical theories regarding the scaling of QNTK concentration concerning quantum system size, we conduct numerical experiments using the open-source package {\tt Yao.jl} \cite{Luo2020Yaojlextensible} in the Julia programming language. Our primary focus is on the global encoding strategy, while additional numerical details on the local encoding case are available in the Supplemental Materials \cite{EICQNTK_supplementary}. We consider mean square loss functions of the type in Eq.~(\ref{MSE_Loss}), encompassing both global and local cases. For global loss, we select the observable $\hat{O}=(|0\rangle\langle0|)^{\otimes n}$, and for the local loss, we use $\hat{O}=Y_1$, where $Y_1$ represents the Pauli $Y$ operator acting on the first qubit site. Our analysis concentrate on highly expressive encoding ensembles $\mathbb{V}$ with $\mathcal{M}^{[1,2]}_{\mathbb{V}}\rightarrow0$. The numerical results, as shown in Figs.~\ref{Fig2_var}(c-d), reveal that both the mean and variance of QNTK over the input data space decay exponentially for both local and global loss, with a relatively modest system size ranging from four to eleven qubits. To support our findings, we conduct numerical experiments with various loss functions and varying variational layers. Fig.~\ref{Fig3_VM} demonstrates that different learning models exhibit exponential concentration of QNTK with similar decay rates, irrespective of the number $d$ of variational layers. 
Simultaneously, Fig.~\ref{Fig3_VM} illustrates that the mean and variance of QNTK increase as the number $d$ of variational layers improves for a constant system size $n$. This aligns with the analytical results presented in Theorem~\ref{theorem:global}, where the number of variational parameters $\Lambda$ scales proportionally with $d$. However, it is challenging to efficiently mitigate the exponential concentration of QNTK by solely increasing the number $d$ of training layers, as $d$ cannot scale exponentially with respect to the system size $n$.

It is worth noting that the decay rate of the mean and variance values for local loss is considerably smaller than that for global loss, as evident in Figs.~\ref{Fig2_var}-\ref{Fig3_VM}. This discrepancy is reasonable since the term $\operatorname{tr}[ \hat{O}^2]$ in Ineq.~(\ref{QNTK_Distribution}) equals 1 for the global observable $\hat{O}=(|0\rangle\langle 0|)^{\otimes n}$, while the term $\operatorname{tr}[ \hat{O}^2]\sim 2^n$ for the local observable $\hat{O}=Y_1$. Consequently, by judiciously selecting a local loss function, it is possible to partially alleviate the issue of exponential concentration. These numerical results align precisely with the analytical findings in Theorem~\ref{theorem:global} and showcase the importance of selecting appropriate loss functions when using QNTK to describe the training dynamics of infinite-width quantum neural networks.

\textit{Discussion and Conclusion.}--- The QNTK plays a pivotal role in elucidating the training dynamics of infinite-width quantum neural networks. 
Although we mainly focus on the expressibility-induced concentration of QNTK, there should exist multiple strategies that can induce the exponential concentration issue, including the model initialization, entanglement, noise, {\it etc.}
In a very recent work \cite{Liu2023Analytic}, the authors demonstrate that both the QNTK and the quantum meta-kernel of a randomly initialized quantum circuit model exhibit exponential concentrations in variational parameter space, resulting in the exponential decay of residual training errors. Furthermore, it would be important to investigate the concentration of QNTK for various quantum circuit architectures, such as quantum convolutional neural networks \cite{Cong2019Quantum}, which exhibit immunity to the barren plateau problem \cite{Pesah2021Absence}. 
In the future, it would be interesting and desirable to investigate the concentration properties of the QNTK along this line. 

Another important topic is to investigate the connections between the expressibility of quantum encoding and the learnability of quantum neural network models. A recent work  \cite{Simon2023The} shows that the eigenvalues of classical neural tangent kernels for the infinite-width neural network can predict the learnability of the corresponding models. More specifically, kernel eigenfunctions with larger eigenvalues are more learnable than those with smaller eigenvalues, thus a target function is more learnable if it places more weight on higher eigenvalue modes. In the future, it is highly desirable to extend these results to their quantum counterparts. Then by studying the connections between the expressibility and the eigenvalue distribution of QNTK, one can establish the straightforward and fundamental connections between the expressibility and learnability of wide quantum neural networks.

In conclusion, we have rigorously established the connections between the expressibility of quantum feature maps and the concentration of QNTK in the input data space, employing the theory of unitary design. Particularly, for global loss functions, we prove that highly expressive quantum encoding can induce the exponential concentration of quantum tangent kernel values to zero, hindering the practical discrimination of different QNTK matrix elements in an efficient manner \cite{Thanasilp2022Exponential}. On the other hand, for local loss functions, the corresponding QNTK continues to exhibit exponential concentration due to high expressibility. However, the decay rate can be partially mitigated. We then conduct numerical experiments to validate the validity of our analytical theories for modest system sizes and various variational layer models. These results reveal a new consequence of quantum expressibility in predicting quantum learning dynamics, which would benefit further studies on wide quantum neural networks, both in theory and practical applications.

We thank Xing-Yan Fan and Dian Zhu for helpful discussions. This work was supported by the National Natural Science Foundation of China (Grants No. 12375060, 12075128, and T2225008), the Tsinghua University Dushi Program, the Innovation Program for Quantum Science and Technology (Grant No. 2021ZD0302203), and Shanghai Qi Zhi Institute.

 \bibliography{DengQAIGroup,BP_ML_MPS}

\clearpage
\onecolumngrid
\makeatletter
\setcounter{page}{1}
\setcounter{figure}{0}
\setcounter{equation}{0}
\setcounter{theorem}{0}

\setcounter{secnumdepth}{3}

\makeatletter
\renewcommand{\thefigure}{S\@arabic\c@figure}
\renewcommand \theequation{S\@arabic\c@equation}
\renewcommand \thetable{S\@arabic\c@table}

\begin{center} 
	{\large \bf Supplemental Materials: 
	Expressibility-induced Concentration of  Quantum Neural Tangent Kernels}
\end{center}

\renewcommand\thesection{S\Roman{section}}
{
  \hypersetup{linkcolor=blue}
  \tableofcontents
}

\newpage

\section{Techniques associate to the proofs}
\subsection{Introduction to the unitary t-design}
We begin by introducing the fundamental concepts related to the unitary $t$-design. For any arbitrary measure $dU$ on the unitary group $U(q)$, the $t$-th moments $M_t(dU)$ of $dU$ are defined using the integral formula \cite{Renes2004Symmetric,Dankert2009Exact,Harrow2009Random}:
\begin{equation}
M_t(dU) = \int_{U(q)} dU \prod_{\lambda=1}^t U_{i_\lambda,j_\lambda}\bar{U}_{i'_\lambda,j'_\lambda},
\end{equation}
Here, $U_{i,j}$ represents the $(i,j)$-th element of the unitary matrix $U$, and $\bar{U}$ represents the complex conjugate of $U$.

 The measure $dU$ is called the unitary $t$-design, if the $k$-th moment with respect to $dU$ equals to the $k$-th moment with respect to the Haar measure $dU_H$ for all the positive integers $k\leq t$,
\begin{equation}
    M_1(dU) = M_1(dU_H),\quad M_2(dU) = M_2(dU_H),\,\, \cdots,\,\, M_t(dU) = M_t(dU_H).  
\end{equation}

Based on the theory of the unitary $t$-design, here we calculate the distribution of the values of the quantum neural tangent kernel (QNTK), which can be utilized for describing the training dynamics of wide quantum neural networks. Here we mainly focus on the mean value and variance of the QNTK with respect to the  input data space, thus we only care about the 
1-moment and 2-moment of the random unitary matrices. We consider the case of highly expressive quantum encoding with $\mathcal{A}^{[1,2]}(\cdot)\rightarrow0$, where the quantum encoding layers are approximate unitary designs.

To calculate the average and variance of the QNTK with respect to the whole parameter space, we only need to care about the 1-moment and 2-moment of the random unitary matrices. It has been shown that the random unitary matrices form the approximate unitary $2$-design \cite{Harrow2009Random}.  The approximate unitary 2-design indicates that the first and second moments are approximately the same as the corresponding moments of $U(q)$ over the Haar measure $dU_H$, i.e., $M_1(dU)=M_1(dU_H)$, $M_2(dU)=M_2(dU_H)$. The first and second moments over the Haar measure are given by the Weingarten functions with the following formula
\begin{eqnarray}
&& M_1(dU_H)=\int_{U(q)}dU_H\;U_{l_0,r_0}\bar{U}_{l_0',r_0'}=\frac{1}{N}\delta_{l_0l_0'}\delta_{r_0r_0'},\label{1-design}\\
&&\begin{aligned}M_2(dU_H)
   =&\int_{U(q)}dU_H\;U_{l_0,r_0}U_{l_1,r_1}{\bar{U}_{l_0',r_0'}}\bar{U}_{l_1',r_1'},\\
=& \frac{1}{q^2-1}(\delta_{l_0l_0'}\delta_{l_1l_1'}\delta_{r_0r_0'}\delta_{r_1r_1'}+\delta_{l_0l_1'}\delta_{l_1l_0'}\delta_{r_0r_1'}\delta_{r_1r_0'})\\
& -\frac{1}{q(q^2-1)}(\delta_{l_0l_0'}\delta_{l_1l_1'}\delta_{r_0r_1'}\delta_{r_1r_0'}
 +\delta_{l_0l_1'}\delta_{l_1l_0'}\delta_{r_0r_0'}\delta_{r_1r_1'}), 
\end{aligned}\label{2-design}
\end{eqnarray}
where $U_{l,r}$ represents the left and right dangling leg of the unitary tensor $U$, and $\bar{U}_{l,r}$ represents the complex conjugation of $U_{l,r}$. 

\subsection{Expressibility of the quantum encoding}
\subsubsection{Expressibility of global quantum encoding}
In dealing with classical data set through quantum machine learning models,  one needs to embed the classical input data into the quantum hardwares. One straightforward approach is to encode the classical data into quantum states through quantum feature map. Common approaches include quantum amplitude encoding, time-evolution encoding, Hamiltonian encoding, and so on. In our later numerical simulations, we choose the quantum amplitude encoding to map the classical input samples into quantum states.  

In the main manuscript, we briefly introduce that the $q$-dimensional  classical input data can be encoded into the quantum state $|{\vec{x}}\rangle = V({\vec{x}})|\psi_0\rangle$ through the mapping $\mathbb{V}_{\vec{x}}: {\vec{x}}\in  \mathbb{C}^q \longmapsto V({\vec{x}})\in \mathcal{V}(q)$,  where  $\mathcal{V}(q)$ is the $q$-dimensional  unitary group.  Thus all of the feasible unitary elements $\mathbb{V}_{\vec{x}}$  constitute an ensemble $\mathbb{V}=\{\mathbb{V}_{\vec{x}}\}$. Following the previous work associated to the expressibility of quantum encoding, one can define the   expressibility of an ensemble $\mathbb{V}$ by how the ensemble $\mathbb{V}$ deviates from a unitary $t$-design  \cite{Sim2019Expressibility,Holmes2022Connecting,Thanasilp2022Exponential}
\begin{equation}\label{Supp_Expressibility}
\mathcal{A}^{[t]}_{\mathbb{V}} (\cdot)=\mathcal{D}^{[t]}_{\rm Haar}(\cdot)-\int_{\mathbb{V}}dVV^{\otimes t}(\cdot)^{\otimes t}(V^{\dag})^{\otimes t},
\end{equation}
where $\mathcal{D}^{[t]}_{\rm Haar}(\cdot) = \int_{\mathcal{U}(q)}d\mu(U) U^{\otimes t}(\cdot)^{\otimes t}(U^{\dag})^{\otimes t}$ represents the $t$-moment integral over Haar ensemble of the unitary group $\mathcal{U}(q)$ \cite{Collins2006Integration}, $\mathcal{A}^{[t]}_{\mathbb{V}}$ is a multi-dimensional tensor and denotes the $t$-moment expressibility. The label ``$\cdot$'' in Eq.~(\ref{Supp_Expressibility}) corresponds to the initial quantum density matrix to be prepared in the quantum learning system.  

Then for a given initial density matrix $\rho_0$, one can define the measure of the $t$-moment expressibility in terms of the trace norm \cite{Thanasilp2022Exponential}, 
\begin{equation}\label{Supp_Measure}
\mathcal{M}^{[t]}_{\mathbb{V}} := \left \| \mathcal{A}^{[t]}_{\mathbb{V}}(\rho_0)\right \| _1.
\end{equation}
As a real number, the expressibility measure $\mathcal{M}^{[t]}_{\mathbb{V}}$ is more convenient in representing the degree of expressibility of arbitrary quantum encoding circuits. For $t$-moment maximal expressibility, the term $\mathcal{M}^{[t]}_{\mathbb{V}}=0$. Besides, as has been mentioned in the main text,  the unitary $t$-design is downward compatible, i.e.,  if an ensemble forms the unitary $t$-design, it also forms the unitary $(t-1)$-design. Thus the expressibility measure of an ensemble is also downward compatible, i.e., $\mathcal{M}^{[t]}_{\mathbb{V}}\rightarrow0$ indicates $\mathcal{M}^{[t-1]}_{\mathbb{V}}\rightarrow0$.  

\subsubsection{Expressibility of local quantum encoding}
Now we consider the local quantum encoding through the mapping $\mathbb{W}_{\vec{x}}: {\vec{x}}\longmapsto W({\vec{x}}) =\bigotimes_{i=1}^{n} W_i (\vec{x})$ , where $W_i \in {\rm SU}(2)$ locates on the $i$-th  qubit site.   

One can quantify the  expressibility of the local tensor product ensemble $\mathbb{W}$ by how the ensemble $\mathbb{W}$ deviates from the  $t$-design of local tensor product unitary group 
\begin{equation}\label{Supp_Expressibility_Local}
\mathcal{A}^{[t]}_{\mathbb{W}} (\cdot)=\mathcal{V}^{[t]}_{\rm Haar^{\otimes n}}(\cdot)-\int_{\mathbb{W}}dWW^{\otimes t}(\cdot)^{\otimes t}(W^{\dag})^{\otimes t},
\end{equation}
where $\mathcal{V}^{[t]}_{\rm Haar^{\otimes n}}(\cdot) = \int_{\mathcal{U}}d\mu(U) U^{\otimes t}(\cdot)^{\otimes t}(U^{\dag})^{\otimes t}$ represents the $t$-moment integral over the Haar ensemble of the local tensor product unitary group $\mathcal{U} = \bigotimes_{i=1}^n \mathcal{U}_i$. $\mathcal{A}^{[t]}_{\mathbb{W}}$ denotes the $t$-moment expressibility of local tensor product unitary ensembles $\mathbb{W}$. The label ``$\cdot$'' in Eq.~(\ref{Supp_Expressibility_Local}) corresponds to the initial quantum density matrix to be prepared in the quantum learning system.  

Then for a given initial density matrix $\rho_0$, one can define the measure of the $t$-moment local expressibility in terms of the trace norm \cite{Thanasilp2022Exponential}, 
\begin{equation}\label{Supp_Measure_Local}
\mathcal{M}^{[t]}_{\mathbb{W}} := \left \| \mathcal{A}^{[t]}_{\mathbb{W}}(\rho_0)\right \| _1.
\end{equation}

As was mentioned above, here we calculate the expectation value and variance of  QNTK matrix elements, and these calculations require at most the 2-moment term of unitary group integrals. Thus we only need to consider the 1-moment and 2-moment  expressibilities  $\mathcal{A}^{[1,2]}_{\mathbb{W}}$, as well as their measures $\mathcal{M}^{[1,2]}_{\mathbb{W}}$.

%

\section{Proof of Theorem 1: Global expressibility induced concentration of QNTK }
Now we investigate the probability distribution of the QNTK values over the input data space. We obtain the following theorem, which has been presented in the main text. 

\begin{theorem}\label{Supp_theorem:global} 
Given the above quantum neural tangent kernel defined in Eq.~(\ref{QNTK}). 
Then for  arbitrary two input samples $\vec{x}$ and $\vec{x}'$ drawn from the same distribution, the QNTK  $K(\vec{x}, \vec{x}')$ obeys
\begin{equation}
\begin{aligned}
{\rm Pr}_{\vec{x}, \vec{x}'}\left(|K(\vec{x}, \vec{x}')+\mathcal{O}(\mathcal{M}^{[1]}_{\mathbb{V}})|\geq \epsilon\right)
\leq  \frac{\epsilon^{-2}\Lambda^2}{(2^{2n}-1)^2}\left(2 {{\rm tr}}[ \hat{O}^2]\right)^2-\epsilon^{-2}\mathcal{O}(\mathcal{M}^{[1]}_{\mathbb{V}},\mathcal{M}^{[2]}_{\mathbb{V}}),
\end{aligned}
\end{equation}
for arbitrary small positive constant $\epsilon$, where $n$ denotes the number of qubits, $\Lambda$ denotes the number of trainable parameters, and $\mathcal{M}^{[1,2]}_{\mathbb{V}}$ means the measure of the $t$-moment expressibility of unitary ensemble $\mathbb{V}$.
\end{theorem}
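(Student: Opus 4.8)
\emph{Proof proposal.} The plan is to apply Chebyshev's inequality to the random variable $K(\vec{x},\vec{x}')$ with $\vec{x},\vec{x}'$ independent and identically distributed, which reduces the theorem to computing $\mathbb{E}[K]$ and $\operatorname{Var}[K]$ as functions of the encoding expressibility. The first step is to observe that differentiating $\varepsilon_i$ in Eq.~\eqref{MSE_Loss} with respect to $\theta_\lambda$ removes the label $y_i$ and, writing $U(\vec{\theta})$ as a product of generated rotations, gives $\partial_{\theta_\lambda}\varepsilon_i=\langle\psi_0|V^\dagger(\vec{x}_i)\,A_\lambda\,V(\vec{x}_i)|\psi_0\rangle$ with $A_\lambda=\tfrac{i}{2}[G_\lambda,\,U^\dagger(\vec{\theta})\hat{O}U(\vec{\theta})]$ a Hermitian operator depending only on $\vec{\theta}$ and $\hat{O}$, where $G_\lambda$ is a unitary conjugate of the $\lambda$-th generator; in particular $\operatorname{tr}A_\lambda=0$. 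Hence $K(\vec{x},\vec{x}')=\sum_\lambda\langle\vec{x}|A_\lambda|\vec{x}\rangle\langle\vec{x}'|A_\lambda|\vec{x}'\rangle$ and, since the two samples are independent,
\[
\mathbb{E}[K]=\sum_\lambda\!\big(\mathbb{E}_{\vec{x}}\langle\vec{x}|A_\lambda|\vec{x}\rangle\big)^2,\qquad
\mathbb{E}[K^2]=\sum_{\lambda,\mu}\!\big(\mathbb{E}_{\vec{x}}\langle\vec{x}|A_\lambda|\vec{x}\rangle\langle\vec{x}|A_\mu|\vec{x}\rangle\big)^2 .
\]

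Next I would replace each data average by an average over the encoding ensemble $\mathbb{V}$ and split it with Eq.~\eqref{Supp_Expressibility}, $\int_{\mathbb{V}}dV\,V^{\otimes t}(\rho_0)^{\otimes t}(V^\dagger)^{\otimes t}=\mathcal{D}^{[t]}_{\mathrm{Haar}}(\rho_0)-\mathcal{A}^{[t]}_{\mathbb{V}}(\rho_0)$ for $t=1,2$. The Haar parts are evaluated with the Weingarten identities \eqref{1-design}--\eqref{2-design}: the first moment gives $\mathbb{E}_{\vec{x}}\langle\vec{x}|A_\lambda|\vec{x}\rangle=\operatorname{tr}[A_\lambda]/2^n-\operatorname{tr}[A_\lambda\mathcal{A}^{[1]}_{\mathbb{V}}(\rho_0)]=-\operatorname{tr}[A_\lambda\mathcal{A}^{[1]}_{\mathbb{V}}(\rho_0)]$ using $\operatorname{tr}A_\lambda=0$, and for the pure initial state $\rho_0$ the second moment gives $\mathbb{E}_{\vec{x}}\langle\vec{x}|A_\lambda|\vec{x}\rangle\langle\vec{x}|A_\mu|\vec{x}\rangle=\frac{\operatorname{tr}[A_\lambda A_\mu]}{2^n(2^n+1)}-\operatorname{tr}[(A_\lambda\otimes A_\mu)\mathcal{A}^{[2]}_{\mathbb{V}}(\rho_0)]$, the disconnected $\operatorname{tr}[A_\lambda]\operatorname{tr}[A_\mu]$ term dropping out. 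Every expressibility remainder is controlled by H\"older's inequality, $|\operatorname{tr}[B\,\mathcal{A}^{[t]}_{\mathbb{V}}(\rho_0)]|\le\|B\|_\infty\,\|\mathcal{A}^{[t]}_{\mathbb{V}}(\rho_0)\|_1=\|B\|_\infty\,\mathcal{M}^{[t]}_{\mathbb{V}}$, with $\|A_\lambda\|_\infty=\mathcal{O}(\|\hat{O}\|_\infty)$ uniformly in $\lambda$. Substituting, $\mathbb{E}[K]=\mathcal{O}((\mathcal{M}^{[1]}_{\mathbb{V}})^2)$ — which the theorem writes as $-\mathcal{O}(\mathcal{M}^{[1]}_{\mathbb{V}})$ — and $\mathbb{E}[K^2]=\frac{1}{2^{2n}(2^n+1)^2}\sum_{\lambda,\mu}\operatorname{tr}[A_\lambda A_\mu]^2+\mathcal{O}\!\big((\mathcal{M}^{[1]}_{\mathbb{V}})^2,\mathcal{M}^{[2]}_{\mathbb{V}}\big)$.

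For the leading Haar term I would argue that $(\operatorname{tr}[A_\lambda A_\mu])_{\lambda,\mu}$ is a Gram matrix for the Hilbert--Schmidt inner product, so Cauchy--Schwarz gives $\sum_{\lambda,\mu}\operatorname{tr}[A_\lambda A_\mu]^2\le\big(\sum_\lambda\operatorname{tr}[A_\lambda^2]\big)^2$; expanding the commutator square in $A_\lambda=\tfrac{i}{2}[G_\lambda,U^\dagger\hat{O}U]$ and using $\|G_\lambda\|_\infty=1$ then yields an $n$-independent bound $\operatorname{tr}[A_\lambda^2]=\mathcal{O}(\operatorname{tr}[\hat{O}^2])$, hence $\sum_{\lambda,\mu}\operatorname{tr}[A_\lambda A_\mu]^2=\mathcal{O}\!\big(\Lambda^2(\operatorname{tr}[\hat{O}^2])^2\big)$. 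Using $2^{2n}(2^n+1)^2\ge(2^{2n}-1)^2$ and feeding $\mathbb{E}[K]$ and $\operatorname{Var}[K]=\mathbb{E}[K^2]-\mathbb{E}[K]^2$ into $\mathrm{Pr}(|K-\mathbb{E}[K]|\ge\epsilon)\le\operatorname{Var}[K]/\epsilon^2$ gives the stated inequality, with the subtracted $\epsilon^{-2}\mathcal{O}((\mathcal{M}^{[1]}_{\mathbb{V}})^2,\mathcal{M}^{[2]}_{\mathbb{V}})$ absorbing all expressibility-dependent corrections (including $-\mathbb{E}[K]^2$). The main obstacle is the second-moment step: one has to carry out the Weingarten contraction carefully enough to see the disconnected pieces vanish by $\operatorname{tr}A_\lambda=0$, produce the uniform (in $\lambda$ and $n$) estimate of $\operatorname{tr}[A_\lambda^2]$ in terms of $\operatorname{tr}[\hat{O}^2]$ for a generic layered ansatz, and keep track of the many $\mathcal{A}^{[1,2]}_{\mathbb{V}}$ remainders so that they collect into a single $\mathcal{O}((\mathcal{M}^{[1]}_{\mathbb{V}})^2,\mathcal{M}^{[2]}_{\mathbb{V}})$ term; the mean computation and the final application of Chebyshev's inequality are then routine.
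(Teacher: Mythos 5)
Your proposal is correct and follows essentially the same route as the paper's proof: Chebyshev's inequality applied to $K(\vec{x},\vec{x}')$, first- and second-moment unitary-design integrals with the expressibility terms $\mathcal{A}^{[1,2]}_{\mathbb{V}}$ collected as $\mathcal{O}(\mathcal{M}^{[1,2]}_{\mathbb{V}})$ remainders, the observation $\operatorname{tr}[\partial_\lambda \hat{O}(\theta)]=0$ for the layered ansatz, and a Cauchy--Schwarz bound reducing $\sum_{\lambda,\lambda'}\left(\operatorname{tr}[\partial_\lambda\partial_{\lambda'}]\right)^2$ to $\Lambda^2\,\mathcal{O}\!\left((\operatorname{tr}[\hat{O}^2])^2\right)$. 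Your only deviation is a streamlining, not a new idea: by exploiting independence of the two samples and using the pure-state second-moment formula $\mathbb{E}\langle\vec{x}|A_\lambda|\vec{x}\rangle\langle\vec{x}|A_\mu|\vec{x}\rangle=\frac{\operatorname{tr}[A_\lambda]\operatorname{tr}[A_\mu]+\operatorname{tr}[A_\lambda A_\mu]}{2^n(2^n+1)}$ you avoid the paper's explicit sixteen-term Weingarten contraction while landing on the same leading variance term.
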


Now we present the proof details of Theorem 1. 

We adopt the Chebyschev's inequality
\begin{equation}\label{Chebyschev_Inequality}
\textrm{Pr}(|X-\mathbb{E}[X]|\geq \epsilon)\leq \frac{{\rm Var}[X]}{\epsilon^2}, \, (\epsilon>0)
\end{equation}
to estimate the value distribution over the data space, where $\mathbb{E}[X]$ and ${\rm Var}[X]$ represent the expectation value and variance of $X$, respectively. In the following subsections,  we mainly focus on deriving the analytical formula of  the expectation value $\mathbb{E}[K_{\vec{x}, \vec{x}'}]$ and variance ${\rm Var}[K_{\vec{x}, \vec{x}'}]$  of the quantum tangent kernel matrix elements with respect to the input data set.

\subsection{Expectation of the QNTK over the  globally encoded  space}

We first calculate the expectation value of the QNTK with respect to the input data space. Here $N$ denotes the dimension of $n$-qubit Hilbert space, $N\equiv2^n$. The expectation value of $K_{\vec{x},  \vec{x}'}$ takes the form
\begin{equation}\label{Supp_Expectation_Global}
\begin{aligned}
\mathbb{E}[K_{\vec{x},  \vec{x}'}]&=\sum_{\lambda} \int_{\mathbb{V}} dV(\vec{x})dV(\vec{x}')\tr \left[ V^\dag(\vec{x})\partial_{\theta_\lambda} \hat{O}(\theta)V(\vec{x})\rho_0 V^\dag(\vec{x}')\partial_{\theta_\lambda} \hat{O}(\theta)V(\vec{x}')\rho_0\right]\\
&=\sum_{\lambda} \int_{\mathbb{V}} dV(\vec{x})dV(\vec{x}')\tr \left[ V^\dag(\vec{x})_{ij}(\partial_{\theta_\lambda} \hat{O}(\theta))_{jk}V(\vec{x})_{kl}(\rho_0)_{lm} V^\dag(\vec{x}')_{mn}(\partial_{\theta_\lambda} \hat{O}(\theta))_{no}V(\vec{x}')_{op}(\rho_0)_{pq}\right]\\
&=\sum_{\lambda} \tr\left[\left(\frac{1}{N}\delta_{il}\delta_{jk}-(\mathcal{A}^{[1]}_{\mathbb{V}})_{il,jk}\right)(\partial_{\theta_\lambda} \hat{O}(\theta))_{jk}(\rho_0)_{lm}\left(\frac{1}{N}\delta_{mp}\delta_{no}-(\mathcal{A}^{[1]}_{\mathbb{V}})_{mp,no}\right)(\partial_{\theta_\lambda} \hat{O}(\theta))_{no}(\rho_0)_{pq}\right]\\
&=\sum_{\lambda}\left[\left(\frac{1}{N}\delta_{il}\delta_{jk}-(\mathcal{A}^{[1]}_{\mathbb{V}})_{il,jk}\right)(\partial_{\theta_\lambda} \hat{O}(\theta))_{jk}(\rho_0)_{lm}\left(\frac{1}{N}\delta_{mp}\delta_{no}-(\mathcal{A}^{[1]}_{\mathbb{V}})_{mp,no}\right)(\partial_{\theta_\lambda} \hat{O}(\theta))_{no}(\rho_0)_{pi}\right]\\
&=\sum_{\lambda} \left[N^{-2}\delta_{il}\delta_{jk}\delta_{mp}\delta_{no}-N^{-1}\delta_{il}\delta_{jk}(\mathcal{A}^{[1]}_{\mathbb{V}})_{mp,no}-N^{-1}\delta_{mp}\delta_{no}(\mathcal{A}^{[1]}_{\mathbb{V}})_{il,jk}+(\mathcal{A}^{[1]}_{\mathbb{V}})_{il,jk}(\mathcal{A}^{[1]}_{\mathbb{V}})_{mp,no}\right]\\
& \qquad \quad \times (\partial_{\theta_\lambda} \hat{O}(\theta))_{jk}(\rho_0)_{lm}(\partial_{\theta_\lambda} \hat{O}(\theta))_{no}(\rho_0)_{pi}\\
&= \frac{1}{N^2} \sum_\lambda (\partial_{\theta_\lambda} \hat{O}(\theta))_{jj}(\partial_{\theta_\lambda} \hat{O}(\theta))_{nn}(\rho_0)_{ip}(\rho_0)_{pi}-\mathcal{O}(\mathcal{M}^{[1]}_{\mathbb{V}})\\
&=  \frac{1}{N^2} \sum_\lambda \left(\tr\left[\partial_{\theta_\lambda} \hat{O}(\theta)\right]\right)^2-\mathcal{O}(\mathcal{M}^{[1]}_{\mathbb{V}})\\
&=  \frac{1}{2^{2n}} \sum_\lambda \left(\tr\left[\partial_{\theta_\lambda} \hat{O}(\theta)\right]\right)^2-\mathcal{O}(\mathcal{M}^{[1]}_{\mathbb{V}}),
\end{aligned}
\end{equation}
where $\mathcal{M}^{[1]}_{\mathbb{V}}$ denotes the measure of 1-moment expressibility, see Eq.~(\ref{Supp_Measure}).

Actually, for 1-moment maximal expressibility with $\mathcal{M}^{[1]}_{\mathbb{V}}=0$,  we see from Eq.~(\ref{Supp_Expectation_Global}) that the expectation value tends to be exponentially small with respect to the qubit number of system, as long as the term $\sum_\lambda \left(\tr\left[\partial_{\theta_\lambda} \hat{O}(\theta)\right]\right)^2$ does not grow exponentially with respect to the qubit number $n$.   Without loss of generality, here we consider a special case of the variational circuit $U(\vec{\theta})$, which has been widely adopted in literature \cite{Mcclean2018Barren},
\begin{equation}\label{U_parametrize}
U(\vec{\theta}) = \prod_{j=1} \exp (i\theta_jW_j),\quad (\forall j\quad   W_j=W_j^\dag, \,W_j^2=1).
\end{equation}
Through concrete calculations, one can then obtain that the trace of the gradient term vanishes, i.e., 
\begin{equation}
\tr[\partial_{\theta_\lambda} \hat{O}(\theta)]=\tr[\partial_{\theta_\lambda} (U^\dag(\theta)\hat{O}U(\theta))]=0.
\end{equation}
Therefore we have the expectation value of $K_{\vec{x}, \vec{x}'}$
\begin{equation}\label{Supp_Expectation}
\mathbb{E}[K_{\vec{x},  \vec{x}'}]= -\mathcal{O}(\mathcal{M}^{[1]}_{\mathbb{V}}) = 0,
\end{equation}
for maximal expressibility of the unitary ensemble $\mathbb{V}$.

\subsection{Variance of the QNTK over the  globally encoded  space}
Now we calculate the variance of the quantum tangent kernel over the input data space
\begin{equation}
\begin{aligned}
{\rm Var}[K_{\vec{x},  \vec{x}'}] &= \mathbb{E}[K_{\vec{x},  \vec{x}'}^2] - \left(\mathbb{E}[K_{\vec{x},  \vec{x}'}]\right)^2= \mathbb{E}[K_{\vec{x},  \vec{x}'}^2] -\mathcal{O}((\mathcal{M}^{[1]}_{\mathbb{V}})^2)\\
&= \sum_{\lambda,\lambda'} \int_{\mathbb{V}} dV(\vec{x})dV(\vec{x}')\tr\left[V^\dag(\vec{x})\partial_{\theta_\lambda} \hat{O}(\theta)V(\vec{x})\rho_0 V^\dag(\vec{x}')\partial_{\theta_\lambda} \hat{O}(\theta)V(\vec{x}')\rho_0\right] \\
&\cdot \tr\left[V^\dag(\vec{x})\partial_{\theta_{\lambda'}} \hat{O}(\theta)V(\vec{x})\rho_0 V^\dag(\vec{x}')\partial_{\theta_{\lambda'}} \hat{O}(\theta)V(\vec{x}')\rho_0\right]-\mathcal{O}((\mathcal{M}^{[1]}_{\mathbb{V}})^2)
\nonumber
\end{aligned}
\end{equation}
\begin{equation} \label{Supp_Var}
\begin{aligned}
&= \sum_{\lambda,\lambda'}\tr \left[ \underbrace{\int_{\mathbb{V}} dV(\vec{x}) (V^\dag(\vec{x}))^{\otimes2} (\partial_\lambda\otimes \partial_{\lambda'})(V(\vec{x}))^{\otimes2}}_\textrm{ \ding{172}}\rho_0^{\otimes2} \underbrace{\int_{\mathbb{V}} dV(\vec{x}') (V^\dag(\vec{x}'))^{\otimes2} (\partial_\lambda\otimes \partial_{\lambda'})(V(\vec{x}'))^{\otimes2}}_\textrm{ \ding{173}}\rho_0^{\otimes2}\right]\\
&\quad -\mathcal{O}\left(\left(\mathcal{M}^{[1]}_{\mathbb{V}}\right)^2\right),
\end{aligned}
\end{equation}
where the term $\partial_{\theta_\lambda} \hat{O}(\theta)$ is abbreviated as $\partial_\lambda$ for latter convenience. We first calculate the term- \ding{172} in Eq.~(\ref{Supp_Var}),
\begin{equation}
\begin{aligned}
\textrm{\ding{172}}=& \int_{\mathbb{V}} dV(\vec{x}) (V^\dag(\vec{x}))^{\otimes2} (\partial_\lambda\otimes \partial_{\lambda'})(V(\vec{x}))^{\otimes2}\\
=& \int_{\mathbb{V}} dV(\vec{x}) (V^\dag(\vec{x}))^{\otimes2}_{ii',jj'} (\partial_\lambda\otimes \partial_{\lambda'})_{jj',kk'}(V(\vec{x}))^{\otimes2}_{kk',ll'}\\
=& \frac{1}{N^2-1}\left[\left(\underbrace{\delta_{jk}\delta_{j'k'}\delta_{il}\delta_{i'l'}}_\textrm{ \ding{174}}+\underbrace{\delta_{jk'}\delta_{j'k}\delta_{il'}\delta_{i'l}}_\textrm{ \ding{175}}\right)-\frac{1}{N}\left(\underbrace{\delta_{jk}\delta_{j'k'}\delta_{il'}\delta_{i'l}}_\textrm{ \ding{176}}+\underbrace{\delta_{jk'}\delta_{j'k}\delta_{il}\delta_{i'l'}}_\textrm{ \ding{177}}\right)\right]\cdot(\partial_\lambda\otimes \partial_{\lambda'})_{jj',kk'}-\left(\mathcal{A}^{[2]}_{\mathbb{V}}\right)_{ii',ll'}.
\end{aligned}
\end{equation} 
Similarly, the term-\ding{173} in Eq.~(\ref{Supp_Var}) reads
\begin{equation}
\begin{aligned}
\textrm{\ding{173}}=& \int_{\mathbb{V}} dV(\vec{x}') (V^\dag(\vec{x}'))^{\otimes2} (\partial_\lambda\otimes \partial_{\lambda'})(V(\vec{x}'))^{\otimes2}\\
=& \int_{\mathbb{V}} dV(\vec{x}') (V^\dag(\vec{x}'))^{\otimes2}_{mm',nn'} (\partial_\lambda\otimes \partial_{\lambda'})_{nn',oo'}(V(\vec{x}'))^{\otimes2}_{oo',pp'}\\
=& \frac{1}{N^2-1}\left[\left(\underbrace{\delta_{no}\delta_{n'o'}\delta_{mp}\delta_{m'p'}}_\textrm{ \ding{178}}+\underbrace{\delta_{no'}\delta_{n'o}\delta_{mp'}\delta_{m'p}}_\textrm{ \ding{179}}\right)-\frac{1}{N}\left(\underbrace{\delta_{no}\delta_{n'o'}\delta_{mp'}\delta_{m'p}}_\textrm{ \ding{180}}+\underbrace{\delta_{no'}\delta_{n'o}\delta_{mp}\delta_{m'p'}}_\textrm{ \ding{181}}\right)\right]\cdot(\partial_\lambda\otimes \partial_{\lambda'})_{nn',oo'}\\
&-\left(\mathcal{A}^{[2]}_{\mathbb{V}}\right)_{mm',pp'}.
\end{aligned}
\end{equation} 
Then the variance of $K_{\bf x,x'}$
\begin{equation}
\begin{aligned}
{\rm Var}[K_{\vec{x}, \vec{x}'}] &= \mathbb{E}[K_{\vec{x}, \vec{x}'}^2] - \left(\mathbb{E}[K_{\vec{x}, \vec{x}'}]\right)^2= \mathbb{E}[K_{\vec{x}, \vec{x}'}^2]-\mathcal{O}((\mathcal{M}^{[1]}_{\mathbb{V}})^2) \\
&= \sum_{\lambda,\lambda'} \int_{\mathbb{V}} dV(\vec{x})dV(\vec{x}')\tr\left[V^\dag(\vec{x})\partial_{\theta_\lambda} \hat{O}(\theta)V(\vec{x})\rho_0 V^\dag(\vec{x}')\partial_{\theta_\lambda} \hat{O}(\theta)V(\vec{x}')\rho_0\right]\\
&\qquad \cdot \tr\left[V^\dag(\vec{x})\partial_{\theta_{\lambda'}} \hat{O}(\theta)V(\vec{x})\rho_0 V^\dag(\vec{x}')\partial_{\theta_{\lambda'}} \hat{O}(\theta)V(\vec{x}')\rho_0\right]-\mathcal{O}((\mathcal{M}^{[1]}_{\mathbb{V}})^2)\\
&= \sum_{\lambda,\lambda'}\tr \left[ \underbrace{\int_{\mathbb{V}} dV(\vec{x}) (V^\dag(\vec{x}))^{\otimes2} (\partial_\lambda\otimes \partial_{\lambda'})(V(\vec{x}))^{\otimes2}}\rho_0^{\otimes2} \underbrace{\int_{\mathbb{V}} dV(\vec{x}') (V^\dag(\vec{x}'))^{\otimes2} (\partial_\lambda\otimes \partial_{\lambda'})(V(\vec{x}'))^{\otimes2}}\rho_0^{\otimes2}\right]\\
&\quad -\mathcal{O}((\mathcal{M}^{[1]}_{\mathbb{V}})^2)\\
&=  \frac{1}{(N^2-1)^2}\sum_{\lambda,\lambda'}\tr \left[\left((\textrm{\ding{174}}+\textrm{\ding{175}})-\frac{1}{N}(\textrm{\ding{176}}+\textrm{\ding{177}})\right)\left((\textrm{\ding{178}}+\textrm{\ding{179}})-\frac{1}{N}(\textrm{\ding{180}}+\textrm{\ding{181}})\right)\right.\\
&\qquad \qquad \qquad  \qquad  \times \left.(\partial_\lambda\otimes \partial_{\lambda'})_{jj',kk'}(\partial_\lambda\otimes \partial_{\lambda'})_{nn',oo'}(\rho_0)_{ll',mm'}(\rho_0)_{pp',qq'}\right] +\mathcal{O}((\mathcal{M}^{[1]}_{\mathbb{V}})^2,\mathcal{M}^{[2]}_{\mathbb{V}}).
\end{aligned}
\end{equation}
There are totally 16 terms $\{${\ding{174}}, {\ding{175}}, {\ding{176}}, {\ding{177}}$\}\times\{${\ding{178}}, {\ding{179}}, {\ding{180}}, {\ding{181}}$\}$, 
\begin{equation}
\begin{aligned}
\textrm{\ding{174}}\times\textrm {\ding{178}} & \rightarrow  \frac{1}{(N^2-1)^2}\sum_{\lambda,\lambda'}\tr \left[\delta_{jk}\delta_{j'k'}\delta_{il}\delta_{i'l'}\delta_{no}\delta_{n'o'}\delta_{mp}\delta_{m'p'}(\partial_\lambda\otimes \partial_{\lambda'})_{jj',kk'}(\partial_\lambda\otimes \partial_{\lambda'})_{nn',oo'}(\rho_0^{\otimes2})_{ll',mm'}(\rho_0^{\otimes2})_{pp',qq'}\right]\\
& =  \frac{1}{(N^2-1)^2}\sum_{\lambda,\lambda'} \tr\left[ (\partial_\lambda\otimes \partial_{\lambda'})_{jj',jj'}(\partial_\lambda\otimes \partial_{\lambda'})_{nn',nn'}(\rho_0^{\otimes2})_{ii',mm'}(\rho_0^{\otimes2})_{mm',qq'}\right]\\
& = \frac{1}{(N^2-1)^2}\sum_{\lambda,\lambda'} \left(\tr \left[\partial_\lambda\otimes \partial_{\lambda'}\right]\right)^2\tr[(\rho_0^2)^{\otimes2}]\\
& = \frac{1}{(N^2-1)^2}\sum_{\lambda,\lambda'} \left(\tr \left[\partial_\lambda\right]\tr \left[ \partial_{\lambda'}\right]\tr\left[\rho_0^2\right]\right)^2.
\end{aligned}
\end{equation}

\begin{equation}
\begin{aligned}
\textrm{\ding{174}}\times\textrm {\ding{179}} & \rightarrow  \frac{1}{(N^2-1)^2}\sum_{\lambda,\lambda'}\tr \left[\delta_{jk}\delta_{j'k'}\delta_{il}\delta_{i'l'}\delta_{no'}\delta_{n'o}\delta_{mp'}\delta_{m'p}(\partial_\lambda\otimes \partial_{\lambda'})_{jj',kk'}(\partial_\lambda\otimes \partial_{\lambda'})_{nn',oo'}(\rho_0^{\otimes2})_{ll',mm'}(\rho_0^{\otimes2})_{pp',qq'}\right]\\
& =  \frac{1}{(N^2-1)^2}\sum_{\lambda,\lambda'} \tr\left[ (\partial_\lambda\otimes \partial_{\lambda'})_{jj',jj'}(\partial_\lambda\otimes \partial_{\lambda'})_{nn',n'n}(\rho_0^{\otimes2})_{ii',mm'}(\rho_0^{\otimes2})_{m'm,qq'}\right]\\
& = \frac{1}{(N^2-1)^2}\sum_{\lambda,\lambda'} \left(\tr \left[\partial_\lambda\otimes\partial_{\lambda'}\right]\tr \left[\partial_\lambda \otimes\partial_{\lambda'} {\rm SWAP}\right]\tr \left[(\rho_0^2)^{\otimes2}{\rm SWAP}\right]\right)\\
& = \frac{1}{(N^2-1)^2}\sum_{\lambda,\lambda'} \left(\tr \left[\partial_\lambda\right]\tr \left[\partial_{\lambda'}\right]\tr \left[\partial_\lambda \partial_{\lambda'}\right]\tr \left[\rho_0^4\right]\right)
\end{aligned}
\end{equation}

\begin{equation}
\begin{aligned}
\textrm{\ding{174}}\times\textrm {\ding{180}} & \rightarrow  -\frac{1}{N(N^2-1)^2}\sum_{\lambda,\lambda'}\tr \left[\delta_{jk}\delta_{j'k'}\delta_{il}\delta_{i'l'}\delta_{no}\delta_{n'o'}\delta_{mp'}\delta_{m'p}(\partial_\lambda\otimes \partial_{\lambda'})_{jj',kk'}(\partial_\lambda\otimes \partial_{\lambda'})_{nn',oo'}(\rho_0^{\otimes2})_{ll',mm'}(\rho_0^{\otimes2})_{pp',qq'}\right]\\
& = - \frac{1}{N(N^2-1)^2}\sum_{\lambda,\lambda'} \tr\left[ (\partial_\lambda\otimes \partial_{\lambda'})_{jj',jj'}(\partial_\lambda\otimes \partial_{\lambda'})_{nn',nn'}(\rho_0^{\otimes2})_{ii',mm'}(\rho_0^{\otimes2})_{m'm,qq'}\right]\\
& =- \frac{1}{N(N^2-1)^2}\sum_{\lambda,\lambda'} \left(\tr \left[\partial_\lambda\otimes\partial_{\lambda'}\right]\tr \left[\partial_\lambda \otimes\partial_{\lambda'} \right]\tr \left[(\rho_0^2)^{\otimes2}{\rm SWAP}\right]\right)\\
& = -\frac{1}{N(N^2-1)^2}\sum_{\lambda,\lambda'} \left((\tr \left[\partial_\lambda\right])^2(\tr \left[\partial_{\lambda'}\right])^2\tr \left[\rho_0^4\right]\right)
\end{aligned}
\end{equation}

\begin{equation}
\begin{aligned}
\textrm{\ding{174}}\times\textrm {\ding{181}} & \rightarrow  -\frac{1}{N(N^2-1)^2}\sum_{\lambda,\lambda'}\tr \left[\delta_{jk}\delta_{j'k'}\delta_{il}\delta_{i'l'}\delta_{no'}\delta_{n'o}\delta_{mp}\delta_{m'p'}(\partial_\lambda\otimes \partial_{\lambda'})_{jj',kk'}(\partial_\lambda\otimes \partial_{\lambda'})_{nn',oo'}(\rho_0^{\otimes2})_{ll',mm'}(\rho_0^{\otimes2})_{pp',qq'}\right]\\
& = - \frac{1}{N(N^2-1)^2}\sum_{\lambda,\lambda'} \tr\left[ (\partial_\lambda\otimes \partial_{\lambda'})_{jj',jj'}(\partial_\lambda\otimes \partial_{\lambda'})_{nn',n'n}(\rho_0^{\otimes2})_{ii',mm'}(\rho_0^{\otimes2})_{mm',qq'}\right]\\
& =- \frac{1}{N(N^2-1)^2}\sum_{\lambda,\lambda'} \left(\tr \left[\partial_\lambda\otimes\partial_{\lambda'}\right]\tr \left[\partial_\lambda \otimes\partial_{\lambda'}{\rm SWAP} \right]\tr \left[(\rho_0^2)^{\otimes2}\right]\right)\\
& = -\frac{1}{N(N^2-1)^2}\sum_{\lambda,\lambda'} \left(\tr \left[\partial_\lambda\right]\tr \left[\partial_{\lambda'}\right]\tr \left[\partial_\lambda \partial_{\lambda'}\right](\tr \left[\rho_0^2\right])^2\right)
\end{aligned}
\end{equation}

\begin{equation}
\begin{aligned}
\textrm{\ding{175}}\times\textrm {\ding{178}} =\textrm{\ding{174}}\times\textrm {\ding{179}} \rightarrow = \frac{1}{(N^2-1)^2}\sum_{\lambda,\lambda'} \left(\tr \left[\partial_\lambda\right]\tr \left[\partial_{\lambda'}\right]\tr \left[\partial_\lambda \partial_{\lambda'}\right]\tr \left[\rho_0^4\right]\right)
\end{aligned}
\end{equation}

\begin{equation}
\begin{aligned}
\textrm{\ding{175}}\times\textrm {\ding{179}} & \rightarrow  \frac{1}{(N^2-1)^2}\sum_{\lambda,\lambda'}\tr \left[\delta_{jk'}\delta_{j'k}\delta_{il'}\delta_{i'l}\delta_{no'}\delta_{n'o}\delta_{mp'}\delta_{m'p}(\partial_\lambda\otimes \partial_{\lambda'})_{jj',kk'}(\partial_\lambda\otimes \partial_{\lambda'})_{nn',oo'}(\rho_0^{\otimes2})_{ll',mm'}(\rho_0^{\otimes2})_{pp',qq'}\right]\\
& =  \frac{1}{(N^2-1)^2}\sum_{\lambda,\lambda'} \tr\left[ (\partial_\lambda\otimes \partial_{\lambda'})_{jj',j'j}(\partial_\lambda\otimes \partial_{\lambda'})_{nn',n'n}(\rho_0^{\otimes2})_{i'i,mm'}(\rho_0^{\otimes2})_{m'm,qq'}\right]\\
& = \frac{1}{(N^2-1)^2}\sum_{\lambda,\lambda'} \left((\tr \left[\partial_\lambda \otimes\partial_{\lambda'} {\rm SWAP}\right])^2\tr \left[(\rho_0^2)^{\otimes2}\right]\right)\\
& = \frac{1}{(N^2-1)^2}\sum_{\lambda,\lambda'} \left(\tr \left[\partial_\lambda \partial_{\lambda'}\right]\tr \left[\rho_0^2\right]\right)^2
\end{aligned}
\end{equation}

\begin{equation}
\begin{aligned}
\textrm{\ding{175}}\times\textrm {\ding{180}} & \rightarrow  -\frac{1}{N(N^2-1)^2}\sum_{\lambda,\lambda'}\tr \left[\delta_{jk'}\delta_{j'k}\delta_{il'}\delta_{i'l}\delta_{no}\delta_{n'o'}\delta_{mp'}\delta_{m'p}(\partial_\lambda\otimes \partial_{\lambda'})_{jj',kk'}(\partial_\lambda\otimes \partial_{\lambda'})_{nn',oo'}(\rho_0^{\otimes2})_{ll',mm'}(\rho_0^{\otimes2})_{pp',qq'}\right]\\
& = - \frac{1}{N(N^2-1)^2}\sum_{\lambda,\lambda'} \tr\left[ (\partial_\lambda\otimes \partial_{\lambda'})_{jj',j'j}(\partial_\lambda\otimes \partial_{\lambda'})_{nn',nn'}(\rho_0^{\otimes2})_{i'i,mm'}(\rho_0^{\otimes2})_{m'm,qq'}\right]\\
& =- \frac{1}{N(N^2-1)^2}\sum_{\lambda,\lambda'} \left(\tr \left[\partial_\lambda\otimes\partial_{\lambda'}{\rm SWAP}\right]\tr \left[\partial_\lambda \otimes\partial_{\lambda'} \right]\tr \left[(\rho_0^2)^{\otimes2}\right]\right)\\
& = -\frac{1}{N(N^2-1)^2}\sum_{\lambda,\lambda'} \left(\tr \left[\partial_\lambda \partial_{\lambda'}\right]\tr \left[\partial_\lambda\right]\tr \left[\partial_{\lambda'}\right](\tr \left[\rho_0^2\right])^2\right)
\end{aligned}
\end{equation}

\begin{equation}
\begin{aligned}
\textrm{\ding{175}}\times\textrm {\ding{181}} & \rightarrow  -\frac{1}{N(N^2-1)^2}\sum_{\lambda,\lambda'}\tr \left[\delta_{jk'}\delta_{j'k}\delta_{il'}\delta_{i'l} \delta_{no'}\delta_{n'o}\delta_{mp}\delta_{m'p'}(\partial_\lambda\otimes \partial_{\lambda'})_{jj',kk'}(\partial_\lambda\otimes \partial_{\lambda'})_{nn',oo'}(\rho_0^{\otimes2})_{ll',mm'}(\rho_0^{\otimes2})_{pp',qq'}\right]\\
& = - \frac{1}{N(N^2-1)^2}\sum_{\lambda,\lambda'} \tr\left[ (\partial_\lambda\otimes \partial_{\lambda'})_{jj',j'j}(\partial_\lambda\otimes \partial_{\lambda'})_{nn',n'n}(\rho_0^{\otimes2})_{i'i,mm'}(\rho_0^{\otimes2})_{mm',qq'}\right]\\
& =- \frac{1}{N(N^2-1)^2}\sum_{\lambda,\lambda'} \left((\tr \left[\partial_\lambda \otimes\partial_{\lambda'}{\rm SWAP} \right])^2\tr \left[(\rho_0^2)^{\otimes2}{\rm SWAP} \right]\right)\\
& = -\frac{1}{N(N^2-1)^2}\sum_{\lambda,\lambda'} \left((\tr \left[\partial_\lambda\partial_{\lambda'}\right])^2\tr \left[\rho_0^4\right]\right)
\end{aligned}
\end{equation}

\begin{equation}
\begin{aligned}
\textrm{\ding{176}}\times\textrm {\ding{178}} = \textrm{\ding{174}}\times\textrm {\ding{180}}  \rightarrow    -\frac{1}{N(N^2-1)^2}\sum_{\lambda,\lambda'} \left((\tr \left[\partial_\lambda\right])^2(\tr \left[\partial_{\lambda'}\right])^2\tr \left[\rho_0^4\right]\right)
\end{aligned}
\end{equation}

\begin{equation}
\begin{aligned}
\textrm{\ding{176}}\times\textrm {\ding{179}} =\textrm{\ding{175}}\times\textrm {\ding{180}}  \rightarrow   -\frac{1}{N(N^2-1)^2}\sum_{\lambda,\lambda'} \left(\tr \left[\partial_\lambda \partial_{\lambda'}\right]\tr \left[\partial_\lambda\right]\tr \left[\partial_{\lambda'}\right](\tr \left[\rho_0^2\right])^2\right)
\end{aligned}
\end{equation}

\begin{equation}
\begin{aligned}
\textrm{\ding{176}}\times\textrm {\ding{180}} & \rightarrow  \frac{1}{N^2(N^2-1)^2}\sum_{\lambda,\lambda'}\tr \left[\delta_{jk}\delta_{j'k'}\delta_{il'}\delta_{i'l}\delta_{no}\delta_{n'o'}\delta_{mp'}\delta_{m'p}(\partial_\lambda\otimes \partial_{\lambda'})_{jj',kk'}(\partial_\lambda\otimes \partial_{\lambda'})_{nn',oo'}(\rho_0^{\otimes2})_{ll',mm'}(\rho_0^{\otimes2})_{pp',qq'}\right]\\
& =  \frac{1}{N^2(N^2-1)^2}\sum_{\lambda,\lambda'} \tr\left[ (\partial_\lambda\otimes \partial_{\lambda'})_{jj',jj'}(\partial_\lambda\otimes \partial_{\lambda'})_{nn',nn'}(\rho_0^{\otimes2})_{i'i,mm'}(\rho_0^{\otimes2})_{m'm,qq'}\right]\\
& = \frac{1}{N^2(N^2-1)^2}\sum_{\lambda,\lambda'} \left(\tr \left[\partial_\lambda\otimes\partial_{\lambda'}\right]\tr \left[\partial_\lambda \otimes\partial_{\lambda'} \right]\tr \left[(\rho_0^2)^{\otimes2}\right]\right)\\
& = \frac{1}{N^2(N^2-1)^2}\sum_{\lambda,\lambda'} \left(\tr \left[\partial_\lambda\right]\tr \left[\partial_{\lambda'}\right]\tr \left[\rho_0^2\right]\right)^2
\end{aligned}
\end{equation}

\begin{equation}
\begin{aligned}
\textrm{\ding{176}}\times\textrm {\ding{181}} & \rightarrow  \frac{1}{N^2(N^2-1)^2}\sum_{\lambda,\lambda'}\tr \left[\delta_{jk}\delta_{j'k'}\delta_{il'}\delta_{i'l}\delta_{no'}\delta_{n'o}\delta_{mp}\delta_{m'p'}(\partial_\lambda\otimes \partial_{\lambda'})_{jj',kk'}(\partial_\lambda\otimes \partial_{\lambda'})_{nn',oo'}(\rho_0^{\otimes2})_{ll',mm'}(\rho_0^{\otimes2})_{pp',qq'}\right]\\
& = \frac{1}{N^2(N^2-1)^2}\sum_{\lambda,\lambda'} \tr\left[ (\partial_\lambda\otimes \partial_{\lambda'})_{jj',jj'}(\partial_\lambda\otimes \partial_{\lambda'})_{nn',n'n}(\rho_0^{\otimes2})_{i'i,mm'}(\rho_0^{\otimes2})_{mm',qq'}\right]\\
& = \frac{1}{N^2(N^2-1)^2}\sum_{\lambda,\lambda'} \left(\tr \left[\partial_\lambda \otimes\partial_{\lambda'}\right]\tr \left[\partial_\lambda \otimes\partial_{\lambda'}{\rm SWAP} \right]\tr \left[(\rho_0^2)^{\otimes2}{\rm SWAP} \right]\right)\\
& = \frac{1}{N^2(N^2-1)^2}\sum_{\lambda,\lambda'} \left(\tr \left[\partial_\lambda\partial_{\lambda'}\right]\tr \left[\partial_\lambda\right]\tr \left[\partial_{\lambda'}\right]\tr \left[\rho_0^4\right]\right)
\end{aligned}
\end{equation}

\begin{equation}
\begin{aligned}
\textrm{\ding{177}}\times\textrm {\ding{178}}=\textrm{\ding{174}}\times\textrm {\ding{181}}  \rightarrow  -\frac{1}{N(N^2-1)^2}\sum_{\lambda,\lambda'} \left(\tr \left[\partial_\lambda\right]\tr \left[\partial_{\lambda'}\right]\tr \left[\partial_\lambda \partial_{\lambda'}\right](\tr \left[\rho_0^2\right])^2\right)
\end{aligned}
\end{equation}

\begin{equation}
\begin{aligned}
\textrm{\ding{177}}\times\textrm {\ding{179}}=\textrm{\ding{175}}\times\textrm {\ding{181}}  \rightarrow   -\frac{1}{N(N^2-1)^2}\sum_{\lambda,\lambda'} \left((\tr \left[\partial_\lambda\partial_{\lambda'}\right])^2\tr \left[\rho_0^4\right]\right)
\end{aligned}
\end{equation}

\begin{equation}
\begin{aligned}
\textrm{\ding{177}}\times\textrm {\ding{180}} =\textrm{\ding{176}}\times\textrm {\ding{181}}  \rightarrow   \frac{1}{N^2(N^2-1)^2}\sum_{\lambda,\lambda'} \left(\tr \left[\partial_\lambda\partial_{\lambda'}\right]\tr \left[\partial_\lambda\right]\tr \left[\partial_{\lambda'}\right]\tr \left[\rho_0^4\right]\right)
\end{aligned}
\end{equation}

\begin{equation}
\begin{aligned}
\textrm{\ding{177}}\times\textrm {\ding{181}} & \rightarrow  \frac{1}{N^2(N^2-1)^2}\sum_{\lambda,\lambda'}\tr \left[\delta_{jk'}\delta_{j'k}\delta_{il}\delta_{i'l'}\delta_{no'}\delta_{n'o}\delta_{mp}\delta_{m'p'}(\partial_\lambda\otimes \partial_{\lambda'})_{jj',kk'}(\partial_\lambda\otimes \partial_{\lambda'})_{nn',oo'}(\rho_0^{\otimes2})_{ll',mm'}(\rho_0^{\otimes2})_{pp',qq'}\right]\\
& = \frac{1}{N^2(N^2-1)^2}\sum_{\lambda,\lambda'} \tr\left[ (\partial_\lambda\otimes \partial_{\lambda'})_{jj',j'j}(\partial_\lambda\otimes \partial_{\lambda'})_{nn',n'n}(\rho_0^{\otimes2})_{ii',mm'}(\rho_0^{\otimes2})_{mm',qq'}\right]\\
& = \frac{1}{N^2(N^2-1)^2}\sum_{\lambda,\lambda'} \left(\tr \left[\partial_\lambda \otimes\partial_{\lambda'}{\rm SWAP} \right]\tr \left[\partial_\lambda \otimes\partial_{\lambda'}{\rm SWAP} \right]\tr \left[(\rho_0^2)^{\otimes2} \right]\right)\\
& = \frac{1}{N^2(N^2-1)^2}\sum_{\lambda,\lambda'} \left(\tr \left[\partial_\lambda\partial_{\lambda'}\right]\tr \left[\rho_0^2\right]\right)^2
\end{aligned}
\end{equation}

\begin{equation}
\begin{aligned}
&\textrm{\ding{174}}\times\textrm {\ding{178}}+\textrm{\ding{174}}\times\textrm {\ding{180}}+\textrm{\ding{176}}\times\textrm {\ding{178}}+\textrm{\ding{176}}\times\textrm {\ding{180}} \rightarrow\\
& \frac{1}{(N^2-1)^2}\sum_{\lambda,\lambda'} \left(\left(\tr \left[\partial_\lambda\right]\tr \left[ \partial_{\lambda'}\right]\tr\left[\rho_0^2\right]\right)^2-\frac{2}{N} \left((\tr \left[\partial_\lambda\right])^2(\tr \left[\partial_{\lambda'}\right])^2\tr \left[\rho_0^4\right]\right)+\frac{1}{N^2} \left(\tr \left[\partial_\lambda\right]\tr \left[\partial_{\lambda'}\right]\tr \left[\rho_0^2\right]\right)^2\right)\\
=& \frac{1}{(N^2-1)^2}\sum_{\lambda,\lambda'}(\tr \left[\partial_\lambda\right]\tr \left[\partial_{\lambda'}\right])^2\tr \left[\left(\rho_0^{\otimes2}-\frac{1}{N}\rho_0^{\otimes2}{\rm SWAP}\right)^2\right].
\end{aligned}
\end{equation}

\begin{equation}
\begin{aligned}
&\textrm{\ding{174}}\times\textrm {\ding{179}}+\textrm{\ding{175}}\times\textrm {\ding{178}}+\textrm{\ding{174}}\times\textrm {\ding{181}}+\textrm{\ding{177}}\times\textrm {\ding{178}} +\textrm{\ding{175}}\times\textrm {\ding{180}}+\textrm{\ding{176}}\times\textrm {\ding{179}}+\textrm{\ding{176}}\times\textrm {\ding{181}}+\textrm{\ding{177}}\times\textrm {\ding{180}} \rightarrow\\
&  \frac{2}{(N^2-1)^2}\sum_{\lambda,\lambda'} \left(\tr \left[\partial_\lambda\right]\tr \left[\partial_{\lambda'}\right]\tr \left[\partial_\lambda \partial_{\lambda'}\right]\tr \left[\rho_0^4\right]\right)-\frac{2}{N(N^2-1)^2}\sum_{\lambda,\lambda'} \left(\tr \left[\partial_\lambda\right]\tr \left[\partial_{\lambda'}\right]\tr \left[\partial_\lambda \partial_{\lambda'}\right](\tr \left[\rho_0^2\right])^2\right)\\
&  -\frac{2}{N(N^2-1)^2}\sum_{\lambda,\lambda'} \left(\tr \left[\partial_\lambda \partial_{\lambda'}\right]\tr \left[\partial_\lambda\right]\tr \left[\partial_{\lambda'}\right](\tr \left[\rho_0^2\right])^2\right) +\frac{2}{N^2(N^2-1)^2}\sum_{\lambda,\lambda'} \left(\tr \left[\partial_\lambda\partial_{\lambda'}\right]\tr \left[\partial_\lambda\right]\tr \left[\partial_{\lambda'}\right]\tr \left[\rho_0^4\right]\right)\\
=& \frac{2}{(N^2-1)^2}\sum_{\lambda,\lambda'} \tr \left[\partial_\lambda\partial_{\lambda'}\right]\tr \left[\partial_\lambda\right]\tr \left[\partial_{\lambda'}\right]\left(\tr\left[ \rho_0^4\right]-\frac{2}{N}\left(\tr\left[ \rho_0^2\right]\right)^2+\frac{1}{N^2}\tr\left[ \rho_0^4\right]\right)\\
=& \frac{2}{(N^2-1)^2}\sum_{\lambda,\lambda'} \tr \left[\partial_\lambda\partial_{\lambda'}\right]\tr \left[\partial_\lambda\right]\tr \left[\partial_{\lambda'}\right]\tr \left[\left(\rho_0^{\otimes2}-\frac{1}{N}\rho_0^{\otimes2}{\rm SWAP}\right)\left(\rho_0^{\otimes2}{\rm SWAP}-\frac{1}{N}\rho_0^{\otimes2}\right)\right].
\end{aligned}
\end{equation}

\begin{equation}
\begin{aligned}
&\textrm{\ding{175}}\times\textrm {\ding{179}}+\textrm{\ding{175}}\times\textrm {\ding{181}}+\textrm{\ding{177}}\times\textrm {\ding{179}}+\textrm{\ding{177}}\times\textrm {\ding{181}} \rightarrow\\
& \frac{1}{(N^2-1)^2}\sum_{\lambda,\lambda'} \left(\left(\tr \left[\partial_\lambda \partial_{\lambda'}\right]\tr \left[\rho_0^2\right]\right)^2 -\frac{2}{N}\left((\tr \left[\partial_\lambda\partial_{\lambda'}\right])^2\tr \left[\rho_0^4\right]\right)+ \frac{1}{N^2} \left(\tr \left[\partial_\lambda\partial_{\lambda'}\right]\tr \left[\rho_0^2\right]\right)^2\right)\\
=& \frac{1}{(N^2-1)^2}\sum_{\lambda,\lambda'} (\tr \left[\partial_\lambda\partial_{\lambda'}\right])^2\tr \left[\left(\rho_0^{\otimes2}-\frac{1}{N}\rho_0^{\otimes2}{\rm SWAP}\right)^2\right]\\
=& \frac{1}{(N^2-1)^2}\sum_{\lambda,\lambda'} (\tr \left[\partial_\lambda\partial_{\lambda'}\right])^2\tr \left[\left(\rho_0^{\otimes2}{\rm SWAP}-\frac{1}{N}\rho_0^{\otimes2}\right)^2\right]
\end{aligned}
\end{equation}
Summing over all the terms, then we have the following variance term 
\begin{equation}\label{Var_K_Express}
\begin{aligned}
{\rm Var}[K_{\vec{x}, \vec{x}'}] =&\frac{1}{(N^2-1)^2}\sum_{\lambda,\lambda'}\tr\left[\left[\left(\rho_0^{\otimes2}-\frac{1}{N}\rho_0^{\otimes2}{\rm SWAP}\right)\tr[\partial_\lambda]\tr[\partial_{\lambda'}]+\left(\rho_0^{\otimes2}{\rm SWAP}-\frac{1}{N}\rho_0^{\otimes2}\right)\tr[\partial_\lambda\partial_{\lambda'}]\right]^2\right]\\
&+\mathcal{O}((\mathcal{M}^{[1]}_{\mathbb{V}})^2,\mathcal{M}^{[2]}_{\mathbb{V}}).
\end{aligned}
\end{equation}

If we consider the structure of quantum circuit defined  in Eq.~(\ref{U_parametrize}), we have 
\begin{equation}
\begin{aligned}
\partial_\lambda\equiv&\partial_{\theta_\lambda} (U^\dag(\theta)\hat{O}U(\theta))\\
=&(\partial_{\theta_\lambda} U^\dag(\theta))\hat{O}U(\theta)+ U^\dag(\theta)\hat{O}(\partial_{\theta_\lambda}U(\theta))\\
=&U_{\lambda,+}^\dag(-iW_\lambda)U_{\lambda,-}^\dag\hat{O}U_{\lambda,-}U_{\lambda,+}+U_{\lambda,+}^\dag U_{\lambda,-}^\dag\hat{O}U_{\lambda,-}(iW_\lambda)U_{\lambda,+}\\
=& i U_{\lambda,+}^\dag \left[ U_{\lambda,-}^\dag\hat{O}U_{\lambda,-}, \, W_\lambda \right] U_{\lambda,+},
\end{aligned}
\end{equation}
then one has $\tr[\partial_\lambda]=0$. Thus the variance of $K_{\bf x,x'}$ in Eq.~(\ref{Var_K_Express}) has the form
\begin{equation}\label{Var_K_Express_Simplify}
\begin{aligned}
{\rm Var}[K_{\vec{x}, \vec{x}'}] =\frac{1}{(N^2-1)^2}\sum_{\lambda,\lambda'}\tr\left[\left(\rho_0^{\otimes2}{\rm SWAP}-\frac{1}{N}\rho_0^{\otimes2}\right)^2\right]\left(\tr[\partial_\lambda\partial_{\lambda'}]\right)^2-\mathcal{O}((\mathcal{M}^{[1]}_{\mathbb{V}})^2,\mathcal{M}^{[2]}_{\mathbb{V}}).
\end{aligned}
\end{equation}
Using the Cauchy-Schwartz inequality for Hermitian operators $A$ and $B$
\begin{equation}
\left(\tr[AB]\right)^2\leq \tr[A^2]\tr[B^2],
\end{equation}
one has 
\begin{equation}
\left(\tr[\partial_\lambda\partial_{\lambda'}]\right)^2\leq \tr[\partial_\lambda^2]\tr[\partial_{\lambda'}^2],
\end{equation}
\begin{equation}\label{Cauchy_Lambda}
\begin{aligned}
\tr[\partial_\lambda^2] = & -\tr\left[ U_{\lambda,+}^\dag \left[ U_{\lambda,-}^\dag\hat{O}U_{\lambda,-}, \, W_\lambda \right] U_{\lambda,+}U_{\lambda,+}^\dag \left[ U_{\lambda,-}^\dag\hat{O}U_{\lambda,-}, \, W_\lambda \right] U_{\lambda,+}\right]\\
= & -\tr\left[  \left[ U_{\lambda,-}^\dag\hat{O}U_{\lambda,-}, \, W_\lambda \right]  \left[ U_{\lambda,-}^\dag\hat{O}U_{\lambda,-}, \, W_\lambda \right] \right]\\
= & -\tr\left[  \left( U_{\lambda,-}^\dag\hat{O}U_{\lambda,-}W_\lambda-W_\lambda U_{\lambda,-}^\dag\hat{O}U_{\lambda,-} \right)  \left( U_{\lambda,-}^\dag\hat{O}U_{\lambda,-}W_\lambda-W_\lambda U_{\lambda,-}^\dag\hat{O}U_{\lambda,-} \right)  \right]\\
=& -2\tr\left[ U_{\lambda,-}^\dag\hat{O}U_{\lambda,-}W_\lambda U_{\lambda,-}^\dag\hat{O}U_{\lambda,-}W_\lambda-U_{\lambda,-}^\dag\hat{O}U_{\lambda,-}W_\lambda W_\lambda U_{\lambda,-}^\dag\hat{O}U_{\lambda,-}\right]\\
=& -2\tr\left[ U_{\lambda,-}^\dag\hat{O}U_{\lambda,-}W_\lambda U_{\lambda,-}^\dag\hat{O}U_{\lambda,-}W_\lambda-\hat{O}^2\right]\\
=& 2\tr\left[ \hat{O}^2-\left(U_{\lambda,-}^\dag\hat{O}U_{\lambda,-}W_\lambda\right)^2\right]\\
\leq & 2\tr\left[ \hat{O}^2\right].
\end{aligned}
\end{equation}
Thus the inequality can be reexpressed as 
\begin{equation}\label{Supp_Variance}
\begin{aligned}
{\rm Var}[K_{\vec{x}, \vec{x}'}] \leq & \frac{1}{(N^2-1)^2}\sum_{\lambda,\lambda'}\tr\left[\left(\rho_0^{\otimes2}{\rm SWAP}-\frac{1}{N}\rho_0^{\otimes2}\right)^2\right]\left(2\tr\left[ \hat{O}^2\right]\right)^2-\mathcal{O}((\mathcal{M}^{[1]}_{\mathbb{V}})^2,\mathcal{M}^{[2]}_{\mathbb{V}})\\
& < \frac{\Lambda^2(2\tr[ \hat{O}^2])^2}{(2^{2n}-1)^2}-\mathcal{O}((\mathcal{M}^{[1]}_{\mathbb{V}})^2,\mathcal{M}^{[2]}_{\mathbb{V}}),
\end{aligned}
\end{equation}
where $\Lambda$ denotes the total number of variational parameters in the quantum machine learning model, and the dimension of Hilbert space $N=2^n$. 
Substituting the results of the expectation and variance of  the QNTK elements in Eq.~(\ref{Supp_Expectation_Global}) and Ineq.~(\ref{Supp_Variance}) into the Chebyschev's inequality, one obtains
\begin{equation}
\begin{aligned}
{\rm Pr}_{\vec{x}, \vec{x}'}\left(|K(\vec{x}, \vec{x}')+\mathcal{O}(\mathcal{M}^{[1]}_{\mathbb{V}})|\geq \epsilon\right)
\leq  \frac{\epsilon^{-2}\Lambda^2}{(2^{2n}-1)^2}\left(2 {{\rm tr}}[ \hat{O}^2]\right)^2-\epsilon^{-2}\mathcal{O}((\mathcal{M}^{[1]}_{\mathbb{V}})^2,\mathcal{M}^{[2]}_{\mathbb{V}}).
\end{aligned}
\end{equation}
This completes our proof.

 \section{Proof of Theorem 2: Local expressibility induce concentration of QNTK}
Here we consider the case of quantum encoding through the  tensor-product of $n$ local unitary operators, i.e., the local quantum encoding, 
\begin{equation}
\mathbb{W}_{\vec{x}}: {\vec{x}} \longmapsto W({\vec{x}}) =\bigotimes_{i=1}^{n} W_i (\vec{x}), \quad W_i (\vec{x}) \in {\rm SU} (2).
\end{equation}
 Thus,  the classical data samples are encoded into the $n$-qubit  local tensor product states $|\psi(\vec{x})\rangle = \bigotimes_{i=1}^{n} W_i (\vec{x})|0\rangle^{\otimes n}$. 
 
 In the case of local quantum encoding, we have the following theorem. 
 
 \begin{theorem}\label{Supp_theorem:local} 
In the case of local quantum encoding,  for  arbitrary two input samples $\vec{x}$ and $\vec{x}'$ drawn from the same distribution, the QNTK  $K(\vec{x}, \vec{x}')$ obeys
\begin{equation}\label{Supp_QNTK_Distribution_Local}
\begin{aligned}
{\rm Pr}_{\vec{x}, \vec{x}'}\left(|K(\vec{x}, \vec{x}')+\mathcal{O}(\mathcal{M}^{[1]}_{\mathbb{W}})|\geq \epsilon\right)
\leq  \frac{\epsilon^{-2}\Lambda^2}{2^{2n-2}}\left( {{\rm tr}}[ \hat{O}^2]\right)^2-\epsilon^{-2}\mathcal{O}((\mathcal{M}^{[1]}_{\mathbb{W}})^2,\mathcal{M}^{[2]}_{\mathbb{W}}).
\end{aligned}
\end{equation}
for arbitrary small positive constant $\epsilon$, where $n$ denotes the number of qubits, $\Lambda$ denotes the number of trainable parameters, and $\mathcal{M}^{[1,2]}_{\mathbb{W}}$ means the measure of the $t$-moment local expressibility of unitary ensemble $\mathbb{W}$.
\end{theorem}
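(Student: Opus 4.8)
The plan is to run the same Chebyshev argument as in the proof of Theorem~\ref{Supp_theorem:global}, replacing every global Haar twirl by the twirl over the product ensemble of single-qubit Haar unitaries and applying the single-qubit moment formulas Eqs.~(\ref{1-design})--(\ref{2-design}) with $q=2$ qubit by qubit; throughout I take the product initial state $\rho_0=(|0\rangle\langle 0|)^{\otimes n}$ implicit in the local encoding $|\psi(\vec x)\rangle=\bigotimes_{i}W_i(\vec x)|0\rangle^{\otimes n}$. For the mean, composing the single-qubit $1$-design channels $A\mapsto \tr_i[A]\,\mathbb{I}_i/2$ over all $n$ qubits reproduces exactly the global $1$-moment twirl $A\mapsto\tr[A]\,\mathbb{I}/2^n$, so the computation of Eq.~(\ref{Supp_Expectation_Global}) carries over verbatim with $\mathcal{M}^{[1]}_{\mathbb{V}}\!\to\!\mathcal{M}^{[1]}_{\mathbb{W}}$, giving $\mathbb{E}[K(\vec x,\vec x')]=2^{-2n}\sum_\lambda(\tr[\partial_\lambda])^2-\mathcal{O}(\mathcal{M}^{[1]}_{\mathbb{W}})$ with $\partial_\lambda\equiv\partial_{\theta_\lambda}\hat O(\theta)$, which equals $-\mathcal{O}(\mathcal{M}^{[1]}_{\mathbb{W}})$ once one invokes $\tr[\partial_\lambda]=0$ for the ansatz of Eq.~(\ref{U_parametrize}).

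For the variance $\var[K(\vec x,\vec x')]=\mathbb{E}[K^2]-\mathcal{O}((\mathcal{M}^{[1]}_{\mathbb{W}})^2)$ I would start from the analogue of Eq.~(\ref{Supp_Var}) and expand the second-moment twirl $\int_{\mathbb{W}}dW\,(W^\dag)^{\otimes2}(\cdot)(W)^{\otimes2}$ as a product of single-qubit $2$-twirls via Eq.~(\ref{2-design}) with $q=2$. This turns it into a double sum over tuples $\vec\sigma,\vec\tau\in\{e,\mathrm{swap}\}^{n}$, with weight $\prod_i\mathrm{Wg}(\sigma_i\tau_i^{-1})$ built from the single-qubit Weingarten entries $\{\tfrac13,-\tfrac16\}$, times $\tr[(\partial_\lambda\otimes\partial_{\lambda'})\bigotimes_i\mathbb{P}_{\tau_i}]$ and $\bigotimes_i\mathbb{P}_{\sigma_i}$. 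Since $\rho_0^{\otimes2}$ is a symmetric product state, every $\bigotimes_i\mathbb{P}_{\sigma_i}$ acts trivially on it and all the $\rho_0$-contractions collapse to $1$; summing over $\vec\sigma$ (each factor contributing $\tfrac13-\tfrac16=\tfrac16$ irrespective of $\tau_i$) then yields $\var[K(\vec x,\vec x')]=6^{-2n}\sum_{\lambda,\lambda'}\big(\tr[(\partial_\lambda\otimes\partial_{\lambda'})\bigotimes_i(\mathbb{I}_i+\mathbb{S}_i)]\big)^2+\mathcal{O}((\mathcal{M}^{[1]}_{\mathbb{W}})^2,\mathcal{M}^{[2]}_{\mathbb{W}})$, where $\mathbb{S}_i$ swaps the two copies of qubit $i$.

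To finish, I would expand $\partial_\lambda=\sum_P c^{(\lambda)}_P P$ in the Pauli basis with $c^{(\lambda)}_P=2^{-n}\tr[\partial_\lambda P]$; a short computation shows $\tr[(P\otimes Q)\bigotimes_i(\mathbb{I}_i+\mathbb{S}_i)]=0$ unless $P=Q$, in which case it equals $6^{n}3^{-|\mathrm{supp}(P)|}$, so the $6^{-2n}$ prefactor cancels and each inner trace becomes $\sum_P c^{(\lambda)}_P c^{(\lambda')}_P 3^{-|\mathrm{supp}(P)|}$. Bounding $3^{-|\mathrm{supp}(P)|}\le1$, then Cauchy--Schwarz on the Pauli coefficients ($\sum_P (c^{(\lambda)}_P)^2=2^{-n}\tr[\partial_\lambda^2]$), and finally the commutator estimate $\tr[\partial_\lambda^2]\le 2\tr[\hat O^2]$ from Eq.~(\ref{Cauchy_Lambda}), gives each squared trace $\le (2\tr[\hat O^2])^2/2^{2n}$ and hence $\var[K(\vec x,\vec x')]<\Lambda^2(\tr[\hat O^2])^2/2^{2n-2}-\mathcal{O}((\mathcal{M}^{[1]}_{\mathbb{W}})^2,\mathcal{M}^{[2]}_{\mathbb{W}})$. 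Substituting this and the mean into Chebyshev's inequality Eq.~(\ref{Chebyschev_Inequality}) yields Ineq.~(\ref{Supp_QNTK_Distribution_Local}). The main obstacle is the bookkeeping of this $2^n$-term tensor-product expansion: one must verify that the partial-swap and Pauli contractions really do collapse as claimed (which hinges on $\rho_0$ being a pure product state and on $\tr[\partial_\lambda]=0$), that all sub-leading pieces carrying $\mathcal{A}^{[1]}_{\mathbb{W}}$ and $\mathcal{A}^{[2]}_{\mathbb{W}}$ reorganize into the single remainder $\mathcal{O}((\mathcal{M}^{[1]}_{\mathbb{W}})^2,\mathcal{M}^{[2]}_{\mathbb{W}})$, and that the resulting exponent is exactly $2^{2n-2}$.
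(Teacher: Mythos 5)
Your proposal is correct and follows essentially the same route as the paper: Chebyshev's inequality, the observation that the local $1$-design reproduces the global first-moment twirl so the mean computation of Eq.~(\ref{Supp_Expectation_Global}) carries over with $\mathcal{M}^{[1]}_{\mathbb{V}}\to\mathcal{M}^{[1]}_{\mathbb{W}}$, and the per-qubit second-moment (Weingarten) calculation that collapses the variance to $6^{-2n}\sum_{\lambda,\lambda'}\bigl(\tr[(\partial_\lambda\otimes\partial_{\lambda'})\bigotimes_{i}(\mathbb{I}+{\rm SWAP}_{i,i})]\bigr)^2$ up to the expressibility remainders, exactly as in Eqs.~(\ref{2_moment_Haar_Local})--(\ref{Supp_Var_Local_Eq}). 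The only deviation is the final bounding step: you expand in the Pauli basis and apply Cauchy--Schwarz to the coefficients (your identity $\tr[(P\otimes P)\bigotimes_i(\mathbb{I}+{\rm SWAP}_{i,i})]=6^{n}3^{-|\mathrm{supp}(P)|}$ is correct and even slightly finer, as it exposes the locality weighting), whereas the paper invokes the trace inequality $|\tr[WA^\dag B]|^2\le\tr[WA^\dag A]\,\tr[WB^\dag B]$ with $W\propto\bigotimes_i(\mathbb{I}+{\rm SWAP}_{i,i})$ together with $\tr[\partial_\lambda^2]\le 2\tr[\hat O^2]$; both yield the same bound $\Lambda^2(\tr[\hat O^2])^2/2^{2n-2}$.
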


 Now we provide the detailed proof of Theorem 2. Here we also adopt the 
 Chebyschev's inequality, which has been mentioned in Ineq.~(\ref{Chebyschev_Inequality}). In the following subsections,  we mainly focus on deriving the analytical formula of  the expectation value $\mathbb{E}[K_{\vec{x}, \vec{x}'}]$ and variance ${\rm Var}[K_{\vec{x}, \vec{x}'}]$  of the quantum tangent kernel matrix elements with respect to the locally  encoded states.
 
 \subsection{Expectation of the QNTK over the  locally encoded  space}

We first calculate the expectation value of the QNTK with respect to the locally encoded states. Here $N$ denotes the dimension of $n$-qubit Hilbert space, $N\equiv2^n$. For brevity, here we denote the local tensor product unitary operator by 
\begin{equation}
W_{jk} =W_{(j_1,j_2,\cdots j_n)(k_1,k_2,\cdots k_n)}.
\end{equation} 
The 1-moment integral over Haar ensemble of the local tensor product unitary group takes the form
\begin{equation}\label{1_design_local}
\begin{aligned}
M_1(dU_{{\rm Haar}^{\otimes n}})&=\int_{{\rm Haar}^{\otimes n}}dU\;U_{j,k}\bar{U}_{l,m} \\
&=\int_{{\rm Haar}^{\otimes n}}dU\;U_{j_1j_2\cdots j_n,k_1k_2\cdots k_n}\bar{U}_{l_1l_2\cdots l_n,m_1m_2\cdots m_n} \\
& = \frac{1}{2^n}\prod_{i=1}^n\delta_{j_il_i}\delta_{k_in_i}\\
& =  \frac{1}{2^n} \delta_{(j_1,j_2,\cdots j_n)(l_1,l_2,\cdots l_n)} \delta_{(k_1,k_2,\cdots k_n)(m_1,m_2,\cdots m_n)}\\
& = \frac{1}{2^n} \delta_{jl} \delta_{km}.
\end{aligned}
\end{equation}

Then the expectation value of $K_{\vec{x},  \vec{x}'}$ over the local encoding ensemble $\mathbb{W}$ takes the similar form  of Eq.~(\ref{Supp_Expectation_Global})
\begin{equation}\label{Supp_Expectation_Local}
\begin{aligned}
\mathbb{E}[K_{\vec{x},  \vec{x}'}]&=\sum_{\lambda} \int_{\mathbb{W}} dW(\vec{x})dW(\vec{x}')\tr \left[ W^\dag(\vec{x})\partial_{\theta_\lambda} \hat{O}(\theta)W(\vec{x})\rho_0 W^\dag(\vec{x}')\partial_{\theta_\lambda} \hat{O}(\theta)W(\vec{x}')\rho_0\right]\\
&=\sum_{\lambda} \int_{\mathbb{W}} dW(\vec{x})dW(\vec{x}')\tr \left[ W^\dag(\vec{x})_{ij}(\partial_{\theta_\lambda} \hat{O}(\theta))_{jk}W(\vec{x})_{kl}(\rho_0)_{lm} W^\dag(\vec{x}')_{mn}(\partial_{\theta_\lambda} \hat{O}(\theta))_{no}W(\vec{x}')_{op}(\rho_0)_{pq}\right]\\
&=\sum_{\lambda} \tr\left[\left(\frac{1}{N}\delta_{il}\delta_{jk}-(\mathcal{A}^{[1]}_{\mathbb{W}})_{il,jk}\right)(\partial_{\theta_\lambda} \hat{O}(\theta))_{jk}(\rho_0)_{lm}\left(\frac{1}{N}\delta_{mp}\delta_{no}-(\mathcal{A}^{[1]}_{\mathbb{W}})_{mp,no}\right)(\partial_{\theta_\lambda} \hat{O}(\theta))_{no}(\rho_0)_{pq}\right]\\
&=\sum_{\lambda}\left[\left(\frac{1}{N}\delta_{il}\delta_{jk}-(\mathcal{A}^{[1]}_{\mathbb{W}})_{il,jk}\right)(\partial_{\theta_\lambda} \hat{O}(\theta))_{jk}(\rho_0)_{lm}\left(\frac{1}{N}\delta_{mp}\delta_{no}-(\mathcal{A}^{[1]}_{\mathbb{W}})_{mp,no}\right)(\partial_{\theta_\lambda} \hat{O}(\theta))_{no}(\rho_0)_{pi}\right]\\
&=\sum_{\lambda} \left[N^{-2}\delta_{il}\delta_{jk}\delta_{mp}\delta_{no}-N^{-1}\delta_{il}\delta_{jk}(\mathcal{A}^{[1]}_{\mathbb{W}})_{mp,no}-N^{-1}\delta_{mp}\delta_{no}(\mathcal{A}^{[1]}_{\mathbb{W}})_{il,jk}+(\mathcal{A}^{[1]}_{\mathbb{W}})_{il,jk}(\mathcal{A}^{[1]}_{\mathbb{W}})_{mp,no}\right]\\
& \qquad \quad \times (\partial_{\theta_\lambda} \hat{O}(\theta))_{jk}(\rho_0)_{lm}(\partial_{\theta_\lambda} \hat{O}(\theta))_{no}(\rho_0)_{pi}\\
&= \frac{1}{N^2} \sum_\lambda (\partial_{\theta_\lambda} \hat{O}(\theta))_{jj}(\partial_{\theta_\lambda} \hat{O}(\theta))_{nn}(\rho_0)_{ip}(\rho_0)_{pi} -\mathcal{O}(\mathcal{M}^{[1]}_{\mathbb{W}})\\
&=  \frac{1}{N^2} \sum_\lambda \left(\tr\left[\partial_{\theta_\lambda} \hat{O}(\theta)\right]\right)^2-\mathcal{O}(\mathcal{M}^{[1]}_{\mathbb{W}})\\
&=  \frac{1}{2^{2n}} \sum_\lambda \left(\tr\left[\partial_{\theta_\lambda} \hat{O}(\theta)\right]\right)^2-\mathcal{O}(\mathcal{M}^{[1]}_{\mathbb{W}}),
\end{aligned}
\end{equation}
where $\mathcal{M}^{[1]}_{\mathbb{W}}$ measures the 1-moment  expressibility of the local tensor product, see Eq.~(\ref{Supp_Measure_Local}).

Thus for sufficient 1-moment expressibility with  $\mathcal{M}^{[1]}_{\mathbb{W}}=0$, under the parametrization of variational circuit in Eq.~(\ref{U_parametrize}), one has the expectation value of $K_{\vec{x}, \vec{x}'}$
\begin{equation}
\mathbb{E}[K_{\vec{x},  \vec{x}'}]= -\mathcal{O}(\mathcal{M}^{[1]}_{\mathbb{W}}) = 0,
\end{equation}
for maximal expressibility of the unitary ensemble $\mathbb{W}$.

\subsection{Variance of the QNTK over the  locally encoded  space}

Here we calculate the variance of the QNTK over the input data space under local quantum encoding. Without loss of generality, we suppose that $\rho_0 = (|0\rangle\langle0|)^{\otimes n}$.  
\begin{equation}
\begin{aligned}
{\rm Var}[K_{\vec{x},  \vec{x}'}] =& \mathbb{E}[K_{\vec{x},  \vec{x}'}^2] - \left(\mathbb{E}[K_{\vec{x},  \vec{x}'}]\right)^2= \mathbb{E}[K_{\vec{x},  \vec{x}'}^2] -\mathcal{O}((\mathcal{M}^{[1]}_{\mathbb{W}})^2)\\
=& \sum_{\lambda,\lambda'} \int_{\mathbb{W}} dW(\vec{x})dW(\vec{x}')\tr\left[W^\dag(\vec{x})\partial_{\theta_\lambda} \hat{O}(\theta)W(\vec{x})\rho_0 W^\dag(\vec{x}')\partial_{\theta_\lambda} \hat{O}(\theta)W(\vec{x}')\rho_0\right]\\
&\qquad \cdot \tr\left[W^\dag(\vec{x})\partial_{\theta_{\lambda'}} \hat{O}(\theta)W(\vec{x})\rho_0 W^\dag(\vec{x}')\partial_{\theta_{\lambda'}} \hat{O}(\theta)W(\vec{x}')\rho_0\right]-\mathcal{O}((\mathcal{M}^{[1]}_{\mathbb{W}})^2) \\
=& \sum_{\lambda,\lambda'}\tr \left[ {\int_{\mathbb{W}} dW(\vec{x}) (W^\dag(\vec{x}))^{\otimes2} (\partial_\lambda\otimes \partial_{\lambda'})(W(\vec{x}))^{\otimes2}}\rho_0^{\otimes2} {\int_{\mathbb{W}} dW(\vec{x}') (W^\dag(\vec{x}'))^{\otimes2} (\partial_\lambda\otimes \partial_{\lambda'})(W(\vec{x}'))^{\otimes2}}\rho_0^{\otimes2}\right]\\
&\qquad -\mathcal{O}\left(\left(\mathcal{M}^{[1]}_{\mathbb{W}}\right)^2\right) \\
=& \sum_{\lambda,\lambda'}\langle 0|^{\otimes 2n} {\int_{\mathbb{W}} dW(\vec{x}) (W^\dag(\vec{x}))^{\otimes2} (\partial_\lambda\otimes \partial_{\lambda'})(W(\vec{x}))^{\otimes2}}|0\rangle^{\otimes 2n} \\
& \qquad \cdot \langle 0|^{\otimes 2n}{\int_{\mathbb{W}} dW(\vec{x}') (W^\dag(\vec{x}'))^{\otimes2} (\partial_\lambda\otimes \partial_{\lambda'})(W(\vec{x}'))^{\otimes2}}|0\rangle^{\otimes 2n}-\mathcal{O}\left(\left(\mathcal{M}^{[1]}_{\mathbb{W}}\right)^2\right)
 \nonumber 
\end{aligned}
\end{equation}

\begin{equation} \label{Supp_Var_Local}
\begin{aligned}
=& \sum_{\lambda,\lambda'}\langle 0|^{\otimes 2n} {\int_{{\rm Haar}^{\otimes n}} dW(\vec{x}) (W^\dag(\vec{x}))^{\otimes2} (\partial_\lambda\otimes \partial_{\lambda'})(W(\vec{x}))^{\otimes2}}|0\rangle^{\otimes 2n}\\
& \qquad \cdot \langle 0|^{\otimes 2n}{\int_{{\rm Haar}^{\otimes n}} dW(\vec{x}') (W^\dag(\vec{x}'))^{\otimes2} (\partial_\lambda\otimes \partial_{\lambda'})(W(\vec{x}'))^{\otimes2}}|0\rangle^{\otimes 2n}-\mathcal{O}\left(\mathcal{M}^{[2]}_{\mathbb{W}}\right)-\mathcal{O}\left(\left(\mathcal{M}^{[1]}_{\mathbb{W}}\right)^2\right).
\end{aligned}
\end{equation}
The term $\partial_{\theta_\lambda} \hat{O}(\theta)$ is abbreviated as $\partial_\lambda$ for latter convenience. 

Through straightforward calculations, one can  show that
\begin{equation}\label{2_moment_Haar_Local}
\begin{aligned}
&\langle 0|^{\otimes 2n}{\int_{{\rm Haar}^{\otimes n}} dW(\vec{x}') (W^\dag(\vec{x}'))^{\otimes2} (\partial_\lambda\otimes \partial_{\lambda'})(W(\vec{x}'))^{\otimes2}}|0\rangle^{\otimes 2n}\\
=& \frac{1}{(2^2-1)^n}\left(1-\frac{1}{2}\right)^n\tr\left[\bigotimes_{i=1}^n\left(\mathbb{I}+{\rm SWAP}_{i,i}\right)\partial_\lambda\otimes \partial_{\lambda'}\right]\\
=& \frac{1}{6^n}\tr\left[\bigotimes_{i=1}^n\left(\mathbb{I}+{\rm SWAP}_{i,i}\right)\partial_\lambda\otimes \partial_{\lambda'}\right],
\end{aligned}
\end{equation}
where the swap operator ${\rm SWAP}_{i,i}$ acts on the $i$-th qubit subsystem of the operator $\partial_\lambda\otimes \partial_{\lambda'}$. Substituting Eq.~(\ref{2_moment_Haar_Local}) into the variance in Eq.~(\ref{Supp_Var_Local}), together with Ineq.~(\ref{Cauchy_Lambda}), one then obtains 
\begin{equation}\label{Supp_Var_Local_Eq}
\begin{aligned}
{\rm Var}[K_{\vec{x},  \vec{x}'}] =& \mathbb{E}[K_{\vec{x},  \vec{x}'}^2] - \left(\mathbb{E}[K_{\vec{x},  \vec{x}'}]\right)^2= \mathbb{E}[K_{\vec{x},  \vec{x}'}^2] -\mathcal{O}((\mathcal{M}^{[1]}_{\mathbb{W}})^2)\\
=& \sum_{\lambda,\lambda'}\langle 0|^{\otimes 2n} {\int_{{\rm Haar}^{\otimes n}} dW(\vec{x}) (W^\dag(\vec{x}))^{\otimes2} (\partial_\lambda\otimes \partial_{\lambda'})(W(\vec{x}))^{\otimes2}}|0\rangle^{\otimes 2n}\\
& \qquad \cdot \langle 0|^{\otimes 2n}{\int_{{\rm Haar}^{\otimes n}} dW(\vec{x}') (W^\dag(\vec{x}'))^{\otimes2} (\partial_\lambda\otimes \partial_{\lambda'})(W(\vec{x}'))^{\otimes2}}|0\rangle^{\otimes 2n}-\mathcal{O}\left(\mathcal{M}^{[2]}_{\mathbb{W}}\right)-\mathcal{O}\left(\left(\mathcal{M}^{[1]}_{\mathbb{W}}\right)^2\right)\\ 
 = &  \frac{1}{6^{2n}} \sum_{\lambda,\lambda'} \left(\tr\left[\bigotimes_{i=1}^n\left(\mathbb{I}+{\rm SWAP}_{i,i}\right)\partial_\lambda\otimes \partial_{\lambda'}\right]\right)^2-\mathcal{O}\left(\left(\mathcal{M}^{[1]}_{\mathbb{W}}\right)^2,\mathcal{M}^{[2]}_{\mathbb{W}}\right).
\end{aligned}
\end{equation}
Using the matrix trace inequality \cite{Renaud1998Some}
\begin{equation}
\left|\tr[W A^\dag B]\right|^2\leq\tr[W A^\dag A]\tr[W B^\dag B],
\end{equation}
where $W$ is positive semi-definite with $\tr(W)=1$, one obtains that
\begin{equation}
\begin{aligned}
&\left(\tr\left[\bigotimes_{i=1}^n\left(\mathbb{I}+{\rm SWAP}_{i,i}\right)\partial_\lambda\otimes \partial_{\lambda'}\right]\right)^2 \\
 \leq& \tr\left[\bigotimes_{i=1}^n\left(\mathbb{I}+{\rm SWAP}_{i,i}\right)\partial_\lambda^2\otimes \mathbb{I}_{2^n}\right]\tr\left[\bigotimes_{i=1}^n\left(\mathbb{I}+{\rm SWAP}_{i,i}\right)\mathbb{I}_{2^n}\otimes \partial_{\lambda'}^2\right]\\
 =& (2+1)^n\tr\left[\partial_\lambda^2\right]\cdot (2+1)^n\tr\left[\partial_{\lambda'}^2\right]\\
 =& 3^{2n}\tr\left[\partial_\lambda^2\right]\tr\left[\partial_{\lambda'}^2\right].
\end{aligned}
\end{equation}

Then the variance in Eq.~(\ref{Supp_Var_Local_Eq}) satisfies
\begin{equation}\label{Supp_Var_Local_Ineq}
\begin{aligned}
{\rm Var}[K_{\vec{x},  \vec{x}'}] =& \mathbb{E}[K_{\vec{x},  \vec{x}'}^2] - \left(\mathbb{E}[K_{\vec{x},  \vec{x}'}]\right)^2= \mathbb{E}[K_{\vec{x},  \vec{x}'}^2] -\mathcal{O}((\mathcal{M}^{[1]}_{\mathbb{W}})^2)\\
=& \sum_{\lambda,\lambda'}\langle 0|^{\otimes 2n} {\int_{{\rm Haar}^{\otimes n}} dW(\vec{x}) (W^\dag(\vec{x}))^{\otimes2} (\partial_\lambda\otimes \partial_{\lambda'})(W(\vec{x}))^{\otimes2}}|0\rangle^{\otimes 2n}\\
& \qquad \cdot \langle 0|^{\otimes 2n}{\int_{{\rm Haar}^{\otimes n}} dW(\vec{x}') (W^\dag(\vec{x}'))^{\otimes2} (\partial_\lambda\otimes \partial_{\lambda'})(W(\vec{x}'))^{\otimes2}}|0\rangle^{\otimes 2n}-\mathcal{O}\left(\mathcal{M}^{[2]}_{\mathbb{W}}\right)-\mathcal{O}\left(\left(\mathcal{M}^{[1]}_{\mathbb{W}}\right)^2\right)\\ 
 = &  \frac{1}{6^{2n}} \sum_{\lambda,\lambda'} \left(\tr\left[\bigotimes_{i=1}^n\left(\mathbb{I}+{\rm SWAP}_{i,i}\right)\partial_\lambda\otimes \partial_{\lambda'}\right]\right)^2-\mathcal{O}\left(\left(\mathcal{M}^{[1]}_{\mathbb{W}}\right)^2,\mathcal{M}^{[2]}_{\mathbb{W}}\right)\\
 \leq & \frac{1}{6^{2n}} \cdot 3^{2n}\sum_{\lambda,\lambda'} \tr\left[\partial_\lambda^2\otimes \partial_{\lambda'}^2\right]-\mathcal{O}\left(\left(\mathcal{M}^{[1]}_{\mathbb{W}}\right)^2,\mathcal{M}^{[2]}_{\mathbb{W}}\right)\\
 = & \frac{1}{2^{2n}} \sum_{\lambda,\lambda'}\tr\left[\partial_\lambda^2\right]\tr\left[\partial_{\lambda'}^2\right]-\mathcal{O}\left(\left(\mathcal{M}^{[1]}_{\mathbb{W}}\right)^2,\mathcal{M}^{[2]}_{\mathbb{W}}\right)\\
 \leq &  \frac{\Lambda^2}{2^{2n-2}}\left(\tr\left[ \hat{O}^2\right]\right)^2-\mathcal{O}\left(\left(\mathcal{M}^{[1]}_{\mathbb{W}}\right)^2,\mathcal{M}^{[2]}_{\mathbb{W}}\right),
\end{aligned}
\end{equation}
where $\Lambda$ denotes the total number of variational parameters in the quantum machine learning model. 
Substituting the results of the expectation and variance of  the QNTK elements in Eq.~(\ref{Supp_Expectation_Local}) and Ineq.~(\ref{Supp_Var_Local_Ineq}) into the Chebyschev's inequality, one obtains
\begin{equation}
\begin{aligned}
{\rm Pr}_{\vec{x}, \vec{x}'}\left(|K(\vec{x}, \vec{x}')+\mathcal{O}(\mathcal{M}^{[1]}_{\mathbb{W}})|\geq \epsilon\right)
\leq  \frac{\epsilon^{-2}\Lambda^2}{2^{2n-2}}\left({{\rm tr}}[ \hat{O}^2]\right)^2-\epsilon^{-2}\mathcal{O}((\mathcal{M}^{[1]}_{\mathbb{W}})^2,\mathcal{M}^{[2]}_{\mathbb{W}}).
\end{aligned}
\end{equation}
This completes our proof.

\begin{figure}
\centering
\includegraphics[width=0.5\linewidth]{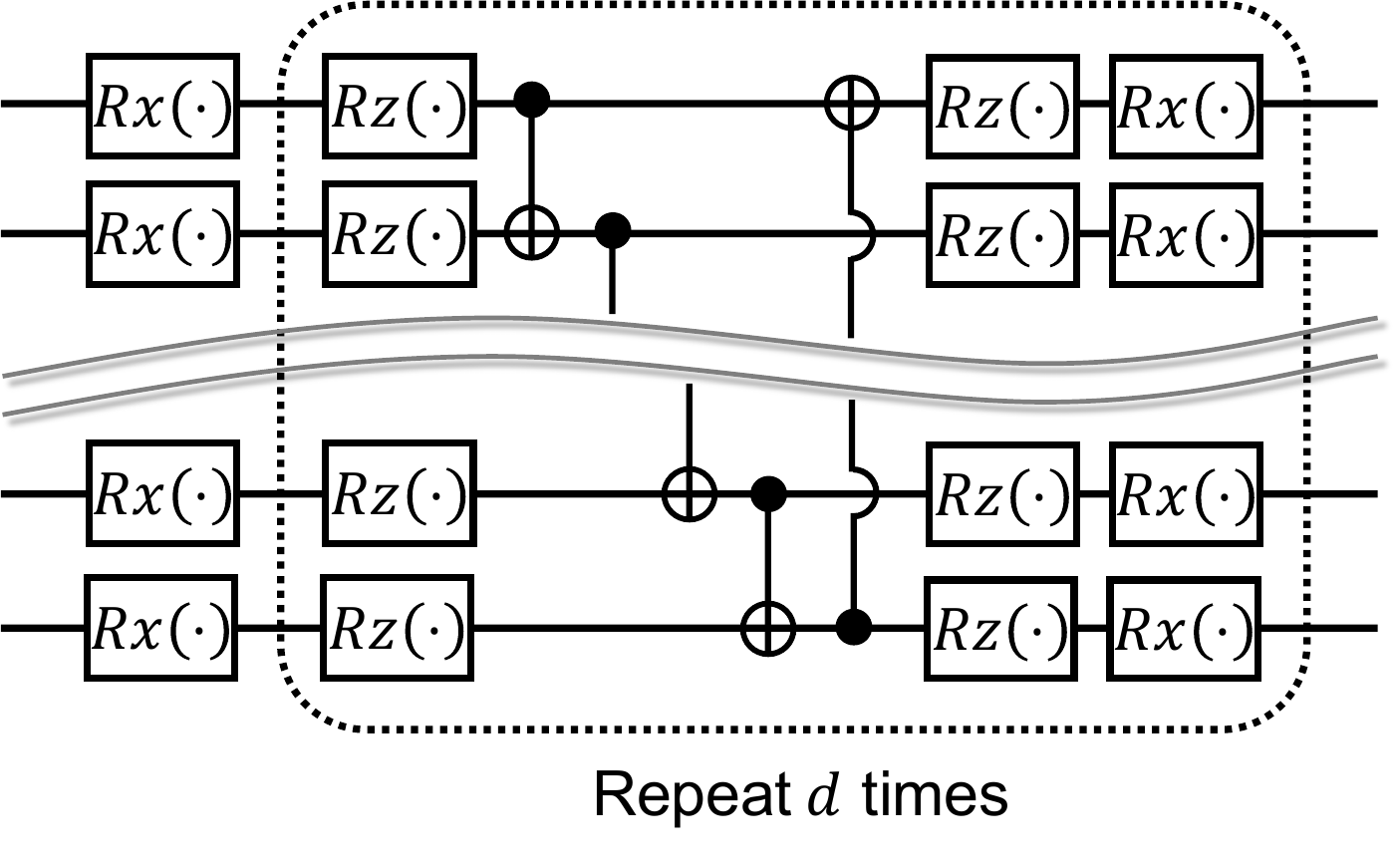} 
\caption{ {Variational circuit structure for our numerical simulations. } $R_x$ ($R_z$) denotes the single-qubit rotational gate along the $x$-direction  ($z$-direction), and each rotational gate possesses one independent variational parameter. The dotted-line block consists of single-qubit rotational gates and two-qubit entangling gates (CNOT). To achieve arbitrary circuit depth, one can repeat the block for $d$ times, thus the depth of the quantum circuit is proportional to the block number $d$.}\label{Supp_Var_Circ}
\end{figure}

\section{Numerical Details}

We calculate the expectation value and variance of the quantum tangent kernel matrix elements based on the open source package Yao.jl \cite{Luo2020Yaojlextensible} in Julia language. 

\subsection{Global quantum encoding}
In our setting of encoding layers, we map the classical input sample $\vec{x}=(x_1,x_2,\cdots ,x_N)$ into quantum states through the quantum amplitude encoding:
$$\vec{x}\rightarrow|\psi_{\vec{x}}\rangle=\sum_{j=1}^N x_j|j\rangle,$$ where $\{|j\rangle\}$ is a set of orthonormal quantum states. 

We choose the default setting of the variational layers defined in Yao.jl,  see Fig.~\ref{Supp_Var_Circ} for details. For arbitrary number of qubits, the dotted-line block in Fig.~\ref{Supp_Var_Circ} consists of single-qubit rotational gates and two-qubit control-NOT gates, and each local rotational gate hosts one independent variational parameter. One can then improve the number of variational parameters by repeatedly applying the dotted-line block on the quantum circuit.  We initially randomize the variational parameters in the circuit. 

In calculating the expectation and variance of QNTK values, we mainly focus on two types of mean square loss functions, including the global and local ones.

\subsubsection{The global loss function}
In the case of global loss function, we choose the loss function as the following formula,
\begin{equation}\label{Supp_Global}
\mathcal{L}(\vec{\theta}) = \frac{1}{2}\sum_i\left(\langle{\vec{x}}_i |U^\dag(\vec{\theta})\left(|0\rangle\langle 0|\right)^{\otimes n}U(\vec{\theta})|{\vec{x}}_i\rangle-y_i\right)^2,
\end{equation}
where $\vec{\theta}$ represents the set of variational  parameters,  ${\vec{x}}_i$ and $y_i$ represent the feature and the label of the $i$-th data sample, respectively. Here we chose the global observable $\hat{O}$ to be a pure state density matrix, namely, $\hat{O}\equiv(|0\rangle\langle 0|)^{\otimes n}$.

With the above global loss function, we then obtain the corresponding QNTK matrix element for arbitrary two input samples $\vec{x}$ and $\vec{x}'$
\begin{equation}
K_{\vec{x},\vec{x}'}=K(\vec{x},\vec{x}')=\sum_{\lambda}\frac{\partial \left|\langle{\vec{x}} |U^\dag(\vec{\theta})|0\rangle^{\otimes n}\right|^2}{\partial \theta_{\lambda}} \frac{\partial\left|\langle{\vec{x}'} |U^\dag(\vec{\theta})|0\rangle^{\otimes n}\right|^2}{\partial \theta_{\lambda}}.
\end{equation}

Then we numerically calculate the expectation value and variance of $K_{\vec{x},\vec{x}'}$ over the input data set $\mathcal{X}=\{\vec{x}\}$. We assume that the quantum encoding circuit is highly expressive, and randomly choose the classical input data from the complex space $(\mathbb{C}^{2})^{\otimes n}$. For convenience, in practical calculations, we randomly choose the quantum states in $n$-qubit Hilbert space. 

We compute the expectation value and variance of QNTK in various cases involving different numbers of qubits and circuit depths. The qubit number ranges from four to eleven, and the number of block layers $d$ takes the cases of $\{5, 10, 20, 50, 100, 200\}$.  The full numerical results have been presented in the main text.  We notice that both the expectation value and variance of QNTK  decay exponentially to zero with respect to the system size $n$, which are consistent with our analytical results.  

\subsubsection{The local loss function}

In the case of local loss function, the loss function takes the following formula,
\begin{equation}\label{Supp_Local}
\mathcal{L}(\vec{\theta}) = \frac{1}{2}\sum_i\left(\langle{\vec{x}}_i |U^\dag(\vec{\theta})Y_1U(\vec{\theta})|{\vec{x}}_i\rangle-y_i\right)^2.
\end{equation}
Here we chose the local observable $\hat{O}$ to be a local Pauli-Y operator acting on the first qubit site,  $\hat{O}\equiv Y_1$.

With the above global loss function, we then obtain the corresponding QNTK matrix element for arbitrary two input samples $\vec{x}$ and $\vec{x}'$
\begin{equation}
K_{\vec{x},\vec{x}'}=K(\vec{x},\vec{x}')=\sum_{\lambda}\frac{\partial \langle{\vec{x}} |U^\dag(\vec{\theta})Y_1U(\vec{\theta})|{\vec{x}}\rangle}{\partial \theta_{\lambda}} \frac{\partial \langle \vec{x}' |U^\dag(\vec{\theta})Y_1U(\vec{\theta})|{\vec{x}}'\rangle}{\partial \theta_{\lambda}}.
\end{equation}

Following the similar approach,  we numerically calculate the expectation value and variance of $K_{\vec{x},\vec{x}'}$ over the input data set $\mathcal{X}=\{\vec{x}\}$. We compute the expectation value and variance of QNTK in various cases involving different numbers of qubits and circuit depths. The qubit number ranges from four to eleven, and the number of block layers $d$ takes the cases of $\{5, 10, 20, 50, 100, 200\}$.  The full numerical results have been presented in the main text.  We find that both the expectation value and variance of QNTK  for local loss decay exponentially to zero with respect to the system size $n$, and the decay slopes are much smaller than those of global loss. These numerical results  match exactly with our analytical findings. Thus we conclude that judiciously choosing the local observables for loss functions would in some sense   mitigate the exponential  concentration problem of QNTK in practical usage.

\subsection{Local quantum encoding}
We also numerically investigate the  case of local quantum encoding, see Figs.~\ref{Supp_LE_mean_var_QNTK}-\ref{Supp_LE_Mean_var_scaling_d}.

\begin{figure}
\centering
\includegraphics[width=0.75\linewidth]{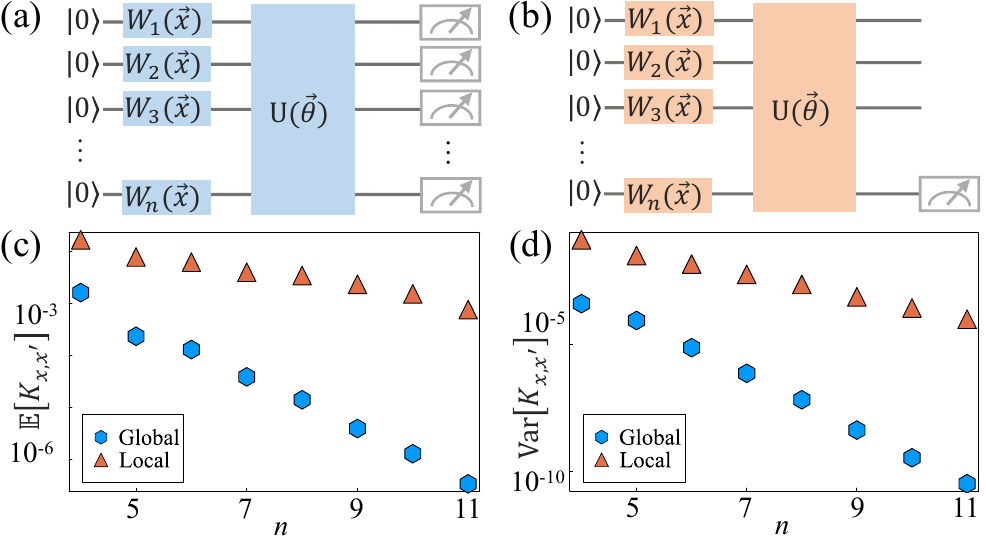} 
\caption{ {Numerical results on the mean and variance of QNTK values under local quantum encoding.}
(a) Quantum circuits of quantum learning models with global observables.
(b) Quantum circuits of quantum learning models with local observables.
 The learning models contain  the encoding layers $\otimes_{i=1}^n W_i(\vec{x})$ and the variational layers $U(\vec{\theta})$. (c) Scatter diagram of the mean  of QNTK values over the training input data set $\{\vec{x}\}$ versus the system size $n$, including both the global and local cases. (d) Variance of QNTK $K(\vec{x},\vec{x}')$ versus the system size $n$. In both cases, the system size $n$ varies from 4-qubit to 11-qubit, and the depth of variational circuit $U(\vec{\theta})$ is proportional to the system size $n$, see Fig.~\ref{Supp_Var_Circ}. To simulate the high expressibility of the  local encoding ensemble $\mathbb{W}$, one randomly samples the input data  from $(SU(2))^{\otimes n}$.}
\label{Supp_LE_mean_var_QNTK}
\end{figure}

\begin{figure}
\centering
\includegraphics[width=0.75\linewidth]{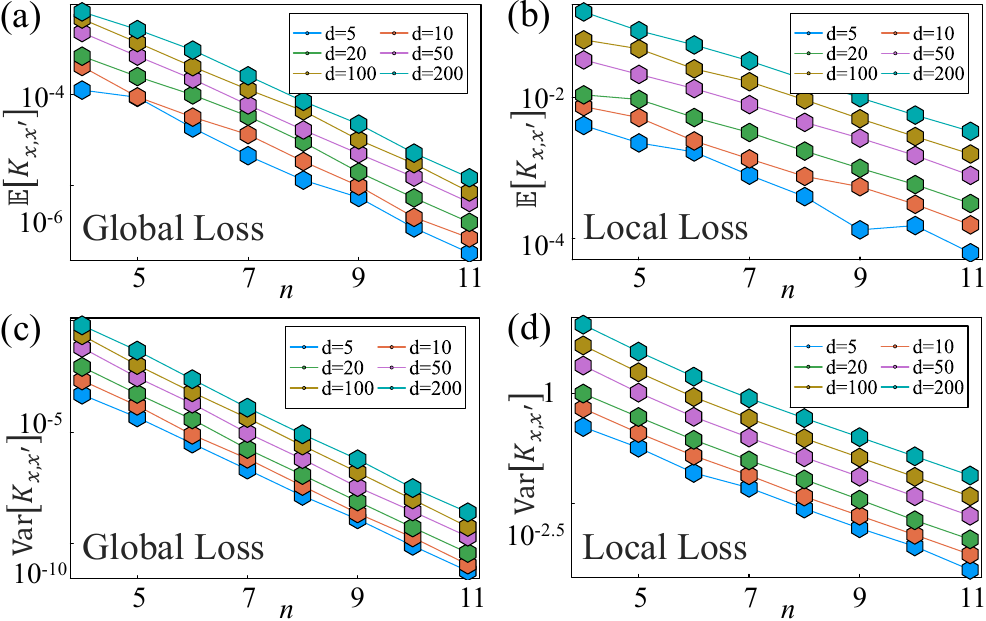}

\caption{Numerical results depict the mean and variance of QNTK values as a function of the number of variational circuit layers, with a focus on the local quantum encoding. 
(a) and (c) showcase the mean value and variance of $K(\vec{x},\vec{x}')$ concerning the global loss function, while (b) and (d) display the mean value and variance of $K(\vec{x},\vec{x}')$ associated with the local loss function. In both scenarios, the system size, denoted as $n$, ranges from 4-qubit to 11-qubit systems, while the number of variational layers, represented by $d$, spans from 5 to 200.}
\label{Supp_LE_Mean_var_scaling_d}
\end{figure}

\end{document}